\pdfoutput=1





\documentclass[sigconf]{aamas} 
\usepackage[inline]{enumitem}




\def\Rat{\mathbb{Q}}                
\newcommand{\set}[1]{\{#1\}}
\usepackage{balance} 
\usepackage{tikz}
\newtheorem{remark}{Remark}
\usepackage[section]{placeins}
\usepackage{thm-restate}
\usepackage{algorithm}
\usepackage{algpseudocode}
\usepackage{bm}

\newcommand\blfootnote[1]{%
  \begingroup
  \renewcommand\thefootnote{}\footnotetext{#1}%
  \endgroup
}


\newif\ifcomments




\newif\ifarXiv

\arXivtrue


\newif\ifshortauthors





\acmSubmissionID{774}


\title[AAMAS-2023 Formatting Instructions]{$k$-Prize Weighted Voting Games}



\ifshortauthors
    \author{Wei-Chen Lee$^{*1}$, David Hyland$^{*1}$, Alessandro Abate$^{1}$, Edith Elkind$^{1}$, Jiarui Gan$^{1}$, Julian Gutierrez$^{2}$, Paul Harrenstein$^{1}$, Michael Wooldridge$^{1}$}
    \affiliation{
      \institution{$^1$University of Oxford, Oxford, UK; $^2$Monash University, Melbourne, Australia}
      }
    \email{{wei-chen.lee, david.hyland, aabate, elkind, jiarui.gan, paul.harrenstein, mjw}@cs.ox.ac.uk}
    \email{julian.gutierrez@monash.edu}
\else
    \author{Wei-Chen Lee*}
    \affiliation{
      \institution{University of Oxford}
      \city{Oxford}
      \country{United Kingdom}}
    \email{wei-chen.lee@cs.ox.ac.uk}
    
    \author{David Hyland*}
    \affiliation{
      \institution{University of Oxford}
      \city{Oxford}
      \country{United Kingdom}}
    \email{david.hyland@cs.ox.ac.uk}
    
    \author{Alessandro Abate}
    \affiliation{
      \institution{University of Oxford}
      \city{Oxford}
      \country{United Kingdom}}
    \email{alessandro.abate@cs.ox.ac.uk}
    
    \author{Edith Elkind}
    \affiliation{
      \institution{University of Oxford}
      \city{Oxford}
      \country{United Kingdom}}
    \email{edith.elkind@cs.ox.ac.uk}
    
    \author{Jiarui Gan}
    \affiliation{
      \institution{University of Oxford}
      \city{Oxford}
      \country{United Kingdom}}
    \email{jiarui.gan@cs.ox.ac.uk}
    
    \author{Julian Gutierrez}
    \affiliation{
      \institution{Monash University}
      \city{Melbourne}
      \country{Australia}}
    \email{julian.gutierrez@monash.edu}
    
    \author{Paul Harrenstein}
    \affiliation{
      \institution{University of Oxford}
      \city{Oxford}
      \country{United Kingdom}}
    \email{paul.harrenstein@cs.ox.ac.uk}
    
    \author{Michael Wooldridge}
    \affiliation{
      \institution{University of Oxford}
      \city{Oxford}
      \country{United Kingdom}}
    \email{michael.wooldridge@cs.ox.ac.uk}
\fi

\begin{abstract}
We introduce a natural variant of weighted voting games, which we refer to as $k$-Prize Weighted Voting Games. Such games consist of $n$ players with weights, and $k$ prizes, of possibly differing values. The players form coalitions, and the $i$-th largest coalition (by the sum of weights of its members) wins the $i$-th largest prize, which is then shared among its members. We present four solution concepts to analyse the games in this class, and characterise the existence of stable outcomes in games with three players and two prizes, and in games with uniform prizes. We then explore the efficiency of stable outcomes in terms of Pareto optimality and utilitarian social welfare. Finally, we study the computational complexity of finding stable outcomes.
\end{abstract}



\keywords{Coalition formation, Cooperative game theory, Cooperative games with externalities, Core, Efficiency, Partition function form, Stability, Weighted Voting Game.}


         
\newcommand{\BibTeX}{\rm B\kern-.05em{\sc i\kern-.025em b}\kern-.08em\TeX}


\begin{document}


\pagestyle{fancy}
\fancyhead{}


\maketitle 


\section{Introduction} \label{sec:intro}

Weighted voting games (WVGs) are an elegant and practically important class of simple cooperative games, which have been extensively studied in the multi-agent systems community~\cite{elkind:2009,chalkiadakis2011computational}\blfootnote{$^*$Equal contribution.}.
In a WVG, each player has a numeric weight, and the game is played by the players forming coalitions. The weight of a coalition is defined as the sum of the weights of its members, and a coalition's value is $1$, if the coalition's weight exceeds a predefined threshold $q > 0$, and is $0$ otherwise. In this way, a coalition is said to be \textit{winning} if the total weight of all its members exceeds the static threshold $q$. WVGs are usually designed so that there can be just one winning coalition: simple majority games, where each player has a weight of $1$, and a coalition is winning if it contains a majority of the players, are an obvious example.

However, there are settings that can be modelled by WVGs with more than one winning coalition. For example, political parties may form coalitions after an election, which then compete to form the Government (1st prize) or the Opposition (2nd prize). This observation motivates us to consider a variation of WVGs in which there is more than one prize, and the criterion for winning a prize is determined by the weight of a coalition \textit{relative} to others. We refer to these new games as \emph{$k$-Prize Weighted Voting Games}. The idea of such games is very intuitive: players form coalitions, and the coalition with the largest cumulative weight takes first prize, the coalition with the second largest weight takes second prize, and so on, until all $k$ prizes have been allocated or all coalitions have been considered. Note that all coalitions that do not rank among the $k$ heaviest ones, if any, receive a zero payoff. 

It is important to note that our model permits {\em externalities}: unlike in conventional WVGs, the performance of a coalition in a $k$-prize game depends not just on its own makeup, but also on the other coalitions that form. Formally, $k$-Prize Weighted Voting Games belong to a class of games known as \textit{partition function games} (PFGs), where the value of a coalition depends both on the coalition's members as well as the coalitions that other players form  \cite{thrall1962generalized, thrall1963n, zhao1992hybrid}. 

The remainder of this paper is structured as follows: In Section~\ref{sec:preliminaries}, we formally define the game and an outcome of the game. We then define four deviation concepts along with their corresponding stability concepts in Section~\ref{sec:deviation_concepts}. In Section~\ref{sec:characterising_stability} we provide necessary and sufficient conditions for the existence of stable outcomes in two sub-classes of games: games
with three players and two prizes, and games with uniform prizes. In Section~\ref{sec:efficiency}, we explore the efficiency of stable outcomes in terms of Pareto optimality and utilitarian social welfare. Section~\ref{sec:complexity} focuses on the computational complexity of stability-related problems, such as finding a beneficial deviation, deciding stability of a given outcome in a game, and determining the existence of stable outcomes in a given game. Finally, we survey related work in Section~\ref{sec:related_work} and conclude in Section~\ref{sec:conclusions}.

\section{Preliminaries} \label{sec:preliminaries}

A \textit{$k $-Prize Weighted Voting Game} consists of $n$ players with weights, and $k$ prizes, where $k \leq n$, of possibly differing positive rational values. Players form coalitions, and the $i$-th 
largest coalition (by the sum of weights of its members) wins the $i$-th largest prize, which 
is shared among its members. If there are more than $k$ coalitions, then coalitions that are 
not among the $k$ largest coalitions do not receive a prize. A tie-break relation $\succ$ 
determines which coalition wins a given prize in the case where multiple coalitions have the 
same weight. Formally, the game is defined as follows:

\begin{definition} [$k$-WVG]
    A \textit{$k$-Prize Weighted Voting Game} ($k$-WVG or also simply \textit{game}) is a tuple $G = (N, \vec{w},  \vec{p}, \succ)$, where 
    \begin{itemize}
        \item $N = \{1, \ldots ,n\}$ is the set of $n$ \textit{players};
        \item $\vec{w} = (w_1, \ldots , w_n) \in \Rat_+^n$ is a vector of \textit{weights}, where $w_i$ is the weight of player $i$. We use $w(S) := \sum_{i \in S} w_i$ to denote the sum of weights of a set of players $S \subseteq N$;
        \item $\vec{p} = (p_1, \ldots , p_k) \in \Rat_+^k$ is a vector of positive \textit{prize values} in descending order, where $k \leq n$.
        \item $\succ$ is a strict total order on the powerset of all players $2^N$, where given two sets $A, B \subseteq N$, $A \succ B$ means that $w(A) > w(B)$, or $w(A) = w(B)$ and $A$ wins the tie-break against $B$.
    \end{itemize}
    Subsuming~$N$ and~$\succ$, we also denote  $G=(N, \vec{w},  \vec{p}, \succ)$ by $(\vec w;\vec p)$.
\end{definition}

Players form a partition $\pi \in \Pi(N)$ of $N$, where $\Pi(X)$ represents the set of all partitions of a finite set $X$ into pairwise disjoint subsets. The \textit{value} $v(C, \pi)$ of a coalition $C$ within a partition $\pi$ is the value of the prize that is won by $C$. Formally, we have
    \[v(C, \pi) =
            \begin{cases}
		        p_i & \text{if $C$ is the $i^{\text{th}}$ largest coalition in $\pi$ w.r.t. $\succ$}\\
                0 & \text{otherwise}.
	        \end{cases}
    \] 

When we refer to the largest or the smallest player/coalition, we implicitly use $\succ$ to break ties. In all computational complexity issues that we consider, we assume access to a polynomial time oracle that can decide, given two coalitions $C_1$ and $C_2$, whether $C_1 \succ C_2$ or $C_2 \succ C_1$. We say that a coalition $C$ in a coalition structure $\pi$ is \textit{winning} if it wins one of the $k$ prizes, i.e., $v(C,\pi) >  0$.

In the remainder of the paper, we refer to $k$-WVGs as simply games, and we use the notation $(\vec{w}; \vec{p})$ to represent a game, where it is assumed that $N = \{1, \ldots, |\vec{w}|\}$, and that some $\succ$ is given.

\begin{definition} [Outcome]
    An {\em outcome} of a $k$-WVG is a tuple $\omega = (\pi, \vec{x})$, where: 
    \begin{itemize}
        \item $\pi = \{C_1, \ldots , C_m\}$ is a partition of $N$ into coalitions.
        \item $\vec{x} = (x_1, \ldots, x_n)$ is the vector of payoffs, where $x_i$ is the payoff to player $i$, such that $x_i \geq 0$ for all $i \in N$ and $\sum_{i \in C_j} x_i = v(C_j, \pi)$ for all $j \in \set{1,\ldots,m}$. Given a subset of players $S \subseteq N$, we write $x(S)$ to denote the sum of the payoffs to these players, i.e., $x(S) := \sum_{i \in S} x_i$. 
    \end{itemize}
\end{definition}

We write $\Omega$ for the set of all possible outcomes of a game. Where there is no 
ambiguity, we may use the shorthand `$i\ldots k$' to present the coalition $\{i, \ldots, k\}$ 
in a partition, and `$|$' to separate coalitions within the partition, enclosed by square 
brackets. For example, the partition of three players $\{\{1, 2\}, \{3\}\}$ can be written 
as $[1\ 2\ |\ 3]$. Conventionally, we order the coalitions within a partition according to $\succ$,
so we can easily identify the coalition winning a certain prize.

\begin{example} [A $k$-WVG and an outcome] \label{eg:game_and_outcome}
    Consider a game of 4 players and 2 prizes: $G = (\vec{w}; \vec{p})$ where $\vec{w} = (0.45, 0.25, 0.24, 0.06)$ and $\vec{p} = (2, 1)$, and an outcome $\omega = (\pi, \vec{x})$ where $\pi = [2\ 3\ |\ 1\ |\ 4]$ and $\vec{x} = (1, 0.8, 1.2, 0)$. In this outcome, players 2 and 3 form a coalition to take the first prize: $v(\{2, 3\}, [2\ 3\ |\ 1\ |\ 4]) = p_1 = 2$. They split the prize as $(x_2, x_3) = (0.8, 1.2)$. Player 1 forms a singleton coalition, i.e., a coalition consisting of only one player, and claims the second prize of value $v(\{1\}, [2\ 3\ |\ 1\ |\ 4]) = p_2 = 1$ by itself. Player 4 is left without a prize and receives a payoff of $0$.
\end{example}

Note that the outcome in Example \ref{eg:game_and_outcome} is intuitively `unstable' for a 
number of reasons: First, player 1 can form a coalition with player 2 or player 4 (or both), 
so as to be guaranteed to win first prize ($p_1 = 2$), which is greater than the sum of 
their current payoffs ($x_1 + x_2 = x_1 + x_2 + x_4 = 1.8$ and $x_1 + x_4 = 1$). Similarly, 
players 2 and 4 can together secure second prize ($p_2 = 1$), which is greater than the 
sum of their current payoffs ($x_2 + x_4 = 0.8$). Or perhaps player 2 could `go at it alone' and 
try to secure second prize ($p_2 = 1 > x_2 = 0.8$), depending on whether players~3 and~4 
would form a coalition and also compete for the second prize? To reason about stability, we 
must first formalise the concept of a deviation.

\section{Deviation and Stability Concepts} \label{sec:deviation_concepts}

We introduce four deviation concepts and their corresponding notions of stability. These concepts differ by what the deviating players are permitted to do: whether they form a single coalition, and, in the case that they split up into multiple deviating coalitions, the incentive requirements that must be satisfied. In all our deviation concepts, the deviating players are \textit{maximally pessimistic} regarding the actions of the \textit{residual} (non-deviating) players, i.e., they assume that the residual players would partition themselves in order to prevent the deviation from being successful if possible.

\begin{definition} [SCD]
    A \textit{single-coalition deviation} (SCD) from an outcome $(\pi,\vec{x})$ is a set $C \subseteq N$ such that for all $\pi' \in \Pi(N)$ with $C \in \pi'$, we have
    \[
        \sum_{i \in C} x_i < v(C, \pi').
    \]
\end{definition}

\begin{definition} [MCD]
    A \textit{multi-coalition deviation} (MCD) from an outcome $(\pi,\vec{x})$ is a partition $\pi_D \in \Pi(D)$ of a set $D \subseteq N$ such that for all $\pi' \in \Pi(N)$ with $\pi_D \subseteq \pi'$ and all $C \in \pi_D$, we have
    \[
        \sum_{i \in C} x_i < v(C, \pi').
    \]
\end{definition}

\begin{definition} [wMCD]
    A \textit{weak multi-coalition deviation} (wMCD) from an outcome $(\pi, \vec{x})$ is a partition $\pi_D \in \Pi(D)$ of a set $D \subseteq N$ such that for all $\pi' \in \Pi(N)$ with $\pi_D \subseteq \pi'$, it holds that:
    \begin{enumerate}
        \item for all $C \in \pi_D$ we have
        \[
        \sum_{i \in C} x_i \leq v(C, \pi');
        \]
        \item for some $C \in \pi_D$, the inequality in $(1)$ is strict.
    \end{enumerate}
\end{definition}

\begin{definition} [MCDT]
    A \textit{multi-coalition deviation with inter-coalition transfer} (MCDT) from an outcome $(\pi, \vec{x})$ is a partition $\pi_D \in \Pi(D)$ of a set $D \subseteq N$ such that for all $\pi' \in \Pi(N)$ with $\pi_D \subseteq \pi'$, we have
    \[
        \sum_{i \in D} x_i < \sum_{C \in \pi_D} v(C, \pi').
    \]
\end{definition}

For a deviation concept $\theta \in \{SCD, MCD, wMCD, MCDT\}$, we say that an outcome $(\pi,\vec{x})$ is \textit{susceptible to a} $\theta$ if there exists a $\theta$ from $(\pi,\vec{x})$. With these concepts defined, we can introduce the notion of stability with respect to each deviation concept.

\begin{definition} [Stability and Core] 
    Given any of the above deviation concepts $\theta \in \{SCD, MCD, wMCD, MCDT\}$, let $\theta(G) \subseteq \Omega$ denote the set of outcomes that are susceptible to a $\theta$ in the game $G$. We say an outcome is $\theta$\textit{-stable} if it is not susceptible to a $\theta$. The set of outcomes that are $\theta$-stable is the $\theta$\textit{-core} of the game, i.e., $\theta\textit{-core(G)} = \Omega \setminus \theta(G)$. 
\end{definition}

It is worth noting that the concept of the SCD-core coincides with that of \textit{$\alpha$-core} \cite{AumannPeleg1960} in the more general partition form games.

\begin{remark} [Irrelevance of coalition structure] \label{rem:irrelevance_of_coalition_structure}
    For a given outcome $(\pi, \vec{x})$, observe that no consideration is given to the coalition structure $\pi$ when assessing whether it is susceptible to some deviation concept; Only the payoff vector $\vec{x}$ is relevant. It follows that two outcomes with identical payoff vectors must be susceptible to the same deviation concepts, and have the same stability properties. Formally, given any two outcomes $\omega = (\pi, \vec{x})$, $\omega' = (\pi', \vec{x}')$ where $\vec{x} = \vec{x}'$, $\omega$ is stable to some deviation concept if and only if $\omega'$ is stable to the same deviation concept. We can therefore also say whether a payoff vector $\vec{x}$ is susceptible (or stable) to some deviation concept.
\end{remark}

Now, we analyse the relationship between the different deviation concepts. The following relations are immediate:

\begin{remark} [Hierarchy of solution concepts] \label{rem:hierarchy}
    For every game $k$-WVG~$G$ it holds that 
    \[SCD(G) \subseteq MCD(G) \subseteq wMCD(G) \subseteq MCDT(G).\] 
    Consequently, we have
    \[\textit{MCDT-core}(G) \subseteq \textit{wMCD-core}(G) \subseteq \textit{MCD-core}(G) \subseteq \textit{SCD-core}(G).\] 
\end{remark}

In summary, in an SCD, all deviating players form a single deviating coalition; an MCD allows the deviating players to form multiple deviating coalitions (i.e., a deviating partition) so that each deviating coalition must be strictly better off; a wMCD is similar to an MCD, except only one deviating coalition needs to be strictly better off, while others are weakly better off; an MCDT requires only that the total value of all deviating coalitions exceeds the current payoff of all deviating players, with the implicit assumption that coalitions can make transfers to one another to ensure that all deviating players are strictly better off. 

Intuitively, we can think of an MCD as multiple deviating coalitions `colluding' with each other, so that each deviating coalition is better off and does not try to prevent the other deviators from succeeding (as the residual players are assumed to do). A wMCD is, in a sense, a `hopeful' MCD, where the deviating coalitions that are strictly better off (the `MCD' component) rely on the other deviating coalitions to be content with their payoff (i.e., $v(C, \pi') = x(C)$) and not act against them. In some cases, there is another interpretation that does not rely on the good will of some deviating coalitions. If, depending on the residual players' response, each of the deviating coalitions has some chance of being strictly better off, then each of the deviating coalitions has the incentive to form a wMCD. 

We now show that SCD, MCD, wMCD and MCDT are, in fact, distinct concepts.\footnote{Note that Proposition \ref{SCD_neq_MCD} depends on a specifically chosen tie-breaking order.}

\begin{proposition} \label{SCD_neq_MCD}
    For some $k$-WVG $G$, $SCD(G) \subsetneq MCD(G)$.
\end{proposition}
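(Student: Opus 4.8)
The plan is to produce a single game $G$ together with one outcome $\omega=(\pi,\vec x)$ that is susceptible to an MCD but to no SCD. Since $SCD(G)\subseteq MCD(G)$ by Remark~\ref{rem:hierarchy}, such an $\omega$ lies in $MCD(G)\setminus SCD(G)$ and immediately yields the strict inclusion. By Remark~\ref{rem:irrelevance_of_coalition_structure} stability depends only on $\vec x$, so the task reduces to exhibiting a payoff vector that is realizable by some partition, SCD-stable, yet admits a multi-coalition deviation.

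The mechanism to exploit is exactly the externality that separates the two notions: a coalition that would be denied a prize when it deviates alone can be shielded inside a multi-coalition deviation, because the players who would do the blocking are committed to one of the co-deviating coalitions and can no longer regroup. I would engineer two coalitions $A$ and $B$ that mutually protect one another — when $B$ deviates alone the residual players can split into two coalitions that both outrank it (so $B$ drops out of the prizes), and when $A$ deviates alone they can mass into a single coalition that outranks it; but in the joint deviation $\pi_D=\{A,B\}$ the would-be blockers of $B$ are precisely the members of $A$ (now fused into one coalition), while the block against $A$ required a member of $B$, so neither block can form and both coalitions strictly gain.

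Concretely, take $N=\{1,2,3,4\}$, weights $\vec w=(11,4,10,6)$, three prizes $\vec p=(7,4,2)$, and a tie-break with $\{2,4\}\succ\{3\}$ (the two tied coalitions of weight $10$); let $\omega=([1\,3\mid 4\mid 2],\vec x)$ with $\vec x=(4,2,3,4)$, so that $\{1,3\}$ takes $p_1$, $\{4\}$ takes $p_2$ and $\{2\}$ takes $p_3$. For the deviation take $\pi_D=\{\{1,2\},\{3\}\}$; its only completion is $\pi'=[1\,2\mid 3\mid 4]$, in which $\{1,2\}$ wins $p_1=7>6=x_1+x_2$ and $\{3\}$ wins $p_2=4>3=x_3$, so $\omega$ is susceptible to this MCD. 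To certify SCD-stability I would check, for each of the fifteen coalitions $C$, that its worst-case value $\min_{\pi'\ni C}v(C,\pi')=p_{1+r}$ — where $r$ is the largest number of pairwise-disjoint residual coalitions that can be made to outrank $C$, and $p_j:=0$ for $j>3$ — is at most $x(C)$. The only tight cases are $\{1\}$ and the heavy pairs $\{1,3\},\{1,4\},\{3,4\}$, each paid exactly its worst-case value; crucially, $\{3\}$ alone is held to $p_3$ (rather than $p_2$) only because the tie-break lets $\{2,4\}$ overtake it, so this is exactly where the promised dependence on the tie-breaking order enters.

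The main obstacle is reconciling SCD-stability with the existence of the MCD. Forcing $A$ and $B$ to be blockable on their own (so that neither is itself an SCD) makes their complements heavy and hence unblockable, which pins each complement's payoff to at least the top prize and consumes most of the budget; threading a single vector through all these simultaneous worst-case constraints while still keeping $x_1+x_2<p_1$ and $x_3<p_2$ is the delicate part. With only two prizes the total payoff available is merely $p_1+p_2$, and these pins then collapse the budget — forcing the heaviest player's share up to $p_2$ and player~$3$'s share up to at least $p_2$, contradicting $x_3<p_2$ — which is why a third prize is used: it raises the total to $p_1+p_2+p_3$, un-pinning the shares and leaving room to meet every guarantee at once, with the tie-break supplying the final fine-tuning. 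The remaining steps are routine: confirming that $\vec x\ge \vec 0$ is realized by the stated partition and that each coalition's worst-case value is as claimed.
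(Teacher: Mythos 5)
Your proposal is correct and takes essentially the same approach as the paper: both exhibit an explicit game, outcome, and tie-breaking order in which a multi-coalition deviation succeeds because the co-deviators absorb each other's would-be blockers, while every single coalition can be blocked on its own. Your four-player, three-prize instance with $\vec w=(11,4,10,6)$, $\vec p=(7,4,2)$, $\vec x=(4,2,3,4)$ and $\{2,4\}\succ\{3\}$ checks out on enumeration of all fifteen coalitions (the only inessential slip is calling $\{1,4\}$ tight, since $x_1+x_4=8>7=p_1$), whereas the paper uses six unit-weight players, two prizes, and a more elaborate tie-break among two-element sets.
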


\begin{proof}
    By definition, $SCD(G) \subseteq MCD(G)$ for each $G$. We show that there exists a game $G$ and an outcome $\omega$ such that $\omega \in MCD(G)$ and $\omega \not \in SCD(G)$, through the use of tie-breaking. 
    
    Consider a game with six players $N = \{1,\ldots, 6\}$ all with weight~$1$ and two prizes $\vec{p} = (3, 2)$. Assume a tie-breaking order such that, for any two distinct coalitions $A$ and $B$ both containing exactly two members, we have $A \succ B \succ \{3, 4\}$ whenever $6 \in B \setminus A$ and $A \neq \{3,4\}$, i.e., $A$ beats $B$ if $B$ contains $6$ and $A$ does not, unless $A = \{3, 4\}$.

    Consider the outcome $\omega = (\pi,\vec x)$ with $\pi=[1\ 2\ 3\ |\ 4\ 5\ |\ 6]$ and $\vec x=(1,1,1,1,1,0)$.
    Let $X = \{1, 2\}$ and $Y = \{5, 6\}$. We find that $[X|Y]$ constitutes an MCD. To see this, note that $x(X) = 2 < p_1$ and $x(Y) = 1 < p_2$, that is, $X$ would prefer $p_1$ and $Y$ would prefer $p_2$ compared to their respective payoffs in $\omega$. Observe that $N \setminus (X \cup Y) = \{3, 4\}$. Accordingly, $X \succ Y \succ \{3,4\}$, so players $3$ and $4$ cannot prevent this deviation by either staying separate or forming a coalition. Thus, $\omega$ is susceptible to an MCD by~$[X\ | Y]$.
    
    However, $\omega$ is not susceptible to an SCD. To see this, first observe that every 
    coalition $A$ that would prefer $p_1$ to its current payoff 
    has either (1) $|A| = 3$ and $6 \in A$, or (2) $|A| \le 2$ (as otherwise $x(A)\geq 3$). 
    Let $B = N \setminus A$. In case (1) we have $|B| = 3$ and $6 \not \in B$. Hence, $B \succ A$. 
    In case (2) we have $w(B) \ge 4 > 2 = w(A)$. In either case, $A$ cannot be guaranteed 
    to win first prize. On the other hand, we note that for every coalition $A$ that would prefer 
    $p_2$ to its current payoff, i.e., $x(A) < p_2$, either (1) $|A| = 1$ or (2) $|A| = 2$, 
    $6 \in A$. In case (1), observe that $|N \setminus A| = 5$ and we can partition 
    $N \setminus A$ into sets $B$ and $B'$ such that $|B|=3$ and $|B'|=2$. Then, 
    $B \succ B' \succ A$. In case (2), we can partition $N \setminus A$ into sets $B$ and $B'$, 
    where $|B|=|B'|=2$ and $B, B'\neq \{3, 4\}$. Thus, we have $B \succ B' \succ A$. 
    In either case, we find that $A$ cannot be guaranteed to win second prize. 
    The outcome~$\omega$ is therefore not susceptible to an SCD.
\end{proof}

\begin{proposition} \label{MCD_neq_wMCD}
    For some $k$-WVG $G$, $MCD(G) \subsetneq wMCD(G)$.
\end{proposition}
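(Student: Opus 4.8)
The plan is to mirror the strategy of Proposition~\ref{SCD_neq_MCD}: exhibit a single game $G$ together with an outcome $\omega$ that is susceptible to a wMCD but to no MCD, i.e.\ $\omega \in wMCD(G) \setminus MCD(G)$. Since $MCD(G) \subseteq wMCD(G)$ always holds (Remark~\ref{rem:hierarchy}), a single such witness establishes the strict inclusion. Unlike Proposition~\ref{SCD_neq_MCD}, I expect no special tie-breaking to be needed: I will take three players whose coalitions have pairwise-distinct weights, so that $\succ$ is determined entirely by $\vec{w}$ and the comparison oracle is never invoked. The separating feature will be precisely the strict-versus-weak distinction between the two concepts.

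Concretely, I would use $G = ((4,3,2); (4,2))$ on $N = \{1,2,3\}$ and the outcome $\omega = (\pi, \vec{x})$ with $\pi = [2\ 3\ |\ 1]$ and $\vec{x} = (2,2,2)$; this is a legitimate outcome since $x(\{2,3\}) = 4 = p_1$ and $x(\{1\}) = 2 = p_2$. The guiding idea is to place $\vec{x}$ on the boundary of the SCD-core so that every coalition able to secure a prize secures exactly its current payoff. The witnessing wMCD is $\pi_D = \{\{1\},\{2\}\}$ on $D = \{1,2\}$. Its only completion is $\pi' = \{\{1\},\{2\},\{3\}\}$, in which the weight order $\{1\} \succ \{2\} \succ \{3\}$ gives $v(\{1\},\pi') = p_1 = 4 > 2 = x_1$ (a strict gain) while $v(\{2\},\pi') = p_2 = 2 = x_2$ (a tie). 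As the single coalition $\{1\}$ strictly improves, $\pi_D$ is a wMCD; as $\{2\}$ only ties, it is not an MCD. This part is routine.

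The main step, and the expected obstacle, is verifying that $\omega$ admits \emph{no} MCD at all. I would avoid brute enumeration with a short structural argument showing that the only coalition that can ever strictly improve is the singleton $\{1\}$ winning $p_1$. Indeed, each two-player coalition has payoff $4$ and can win at most $p_1 = 4$, so at best it ties; the grand coalition has payoff $6 > 4 = p_1$; and the singletons $\{2\}$ and $\{3\}$ can never be the largest block, since any block containing player~$1$ (weight~$4$) outweighs them, so they can win at most $p_2 = 2$ and merely tie their payoff. Hence any MCD must contain $\{1\}$ as a strictly-improving coalition, and $\{1\}$ is guaranteed $p_1$ only when the block $\{2,3\}$ (weight~$5$) never forms, i.e.\ players $2$ and $3$ are kept apart. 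A short case check then shows this is impossible within an MCD: if both $2$ and $3$ deviate, $\pi_D = \{\{1\},\{2\},\{3\}\}$ and $\{2\}$ only ties; if exactly one of them deviates, the other is residual and, being adversarial, either blocks $\{1\}$ (when $2,3$ are both left residual) or forces the deviating singleton ($\{2\}$ ties, $\{3\}$ drops to third place and loses value) to fail the strict test. In every deviating partition some coalition fails to improve strictly, so no MCD exists. Combining this with the wMCD above yields $\omega \in wMCD(G) \setminus MCD(G)$, hence $MCD(G) \subsetneq wMCD(G)$.
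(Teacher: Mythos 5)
Your proof is correct and takes essentially the same route as the paper: the paper's witness is the game $((0.4,0.3,0.3);(2,1))$ with outcome $([2\ 3\ |\ 1],(1,1,1))$ --- the same configuration as yours up to rescaling ($R=2$, $\{2,3\}\succ\{1\}$, equal-split payoffs, and the identical wMCD witness $[1\,|\,2]$) --- with the verification deferred to case~2 of Proposition~\ref{prop:3_player_2_prize_game}. Your only departures are cosmetic: you pick pairwise-distinct weights to avoid tie-breaking and carry out the MCD-stability check inline, and both the wMCD witness and the exhaustive MCD analysis are sound.
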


\begin{proof}
    Consider the game $G = ((0.4,0.3,0.3),(2,1))$ and the outcome $\omega = ([2\ 3\ \vert\ 1],(1,1,1))$. It can be verified that $\omega$ is MCD-stable but not wMCD-stable (see Proposition \ref{prop:3_player_2_prize_game} case 2).
\end{proof}

\begin{proposition} \label{wMCD_neq_MCDT}
    For some $k$-WVG $G$, $wMCD(G) \subsetneq MCDT(G)$.
\end{proposition}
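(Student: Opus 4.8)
The plan is to exhibit a single $k$-WVG $G$ together with an outcome $\omega=(\pi,\vec{x})$ that is susceptible to an MCDT but admits no wMCD; since $wMCD(G)\subseteq MCDT(G)$ always holds (Remark~\ref{rem:hierarchy}), such a witness gives the strict inclusion. By Remark~\ref{rem:irrelevance_of_coalition_structure} I only need to pin down the payoff vector $\vec{x}$ and reason about $\vec{x}$ alone, choosing any convenient $\pi$ that makes it feasible (e.g.\ a current partition whose winning coalitions distribute $p_1+p_2$).

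The phenomenon I would exploit is exactly the extra power of transfers: an MCDT only asks that the \emph{sum} of guaranteed prize values beat $\sum_{i\in D}x_i$, whereas a wMCD insists that \emph{every} deviating coalition be weakly better off on its own. So I would search for a deviating partition $\pi_D=\{C_1,\dots,C_m\}$ whose coalitions jointly guarantee, against every residual response, a total value strictly exceeding $\sum_{i\in D}x_i$, but for which no coalition-wise weak improvement is simultaneously achievable. The cleanest realisation is a deviation in which the heaviest deviating coalitions $C_1,C_2$ are guaranteed the top two prizes (the non-deviators being too light to outrank either), while a further coalition $C_3$ with $x(C_3)>0$ is necessarily pushed out of the prize-winning positions and receives $0$. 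Then $\pi_D$ is an MCDT precisely when the residual players keep some positive payoff (so $\sum_{i\in D}x_i<p_1+p_2$), yet it is \emph{not} a wMCD, because $C_3$ obtains $0<x(C_3)$ and so is strictly worse off; the surplus can be realised only by transferring from $C_1,C_2$ to $C_3$.

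The substance of the argument, and the main obstacle, is proving that $\vec{x}$ is wMCD-stable, i.e.\ that \emph{no} partition of \emph{any} subset is a wMCD. I would organise this around the guaranteed prize $g(C)$ of each coalition $C$ — the worst case over residual responses of $v(C,\cdot)$ when $C$ deviates alone — which for these games is determined by how many disjoint coalitions the residual players can place above $C$. Single-coalition deviations (equivalently, one-block wMCDs, hence all SCDs) are excluded exactly when $x(C)\ge g(C)$ for every $C$; for genuine multi-block partitions I would show that each admits a residual response under which some block is strictly worse off (or none is strictly better). Two points must be engineered with care: first, \emph{dropping} the sacrificial coalition $C_3$ must let the residual pool (now enlarged by $C_3$'s players) assemble a coalition that knocks $C_2$ out of the prizes, so that $[C_1\mid C_2]$ is blocked; and second, \emph{merging} deviators must not create a cheap majority deviation, which forces $x(C_1\cup C_2)\ge p_1$.

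The hard part will be making all these guaranteed-prize inequalities simultaneously satisfiable while still leaving $\sum_{i\in D}x_i<p_1+p_2$ strict. The natural small attempts run into a ``cascade'': a complementary coalition (the complement of a prize-winner) is forced to be a majority, pinning a winner's payoff at $p_2$, and the remaining $g(C)=p_2$ constraints then exhaust the entire budget $p_1+p_2$, collapsing $\vec{x}$ to a degenerate point that in fact \emph{does} admit a wMCD. Escaping this is why the example needs enough players (so that no triple-vs-complement is automatically a majority) and, plausibly, a tailored tie-breaking order analogous to the one used in Proposition~\ref{SCD_neq_MCD}, chosen so that the coalitions which could weakly improve cannot guarantee the prizes they covet once the sacrificial coalition's weight is returned to the residual players. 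Once the weights and $\vec{x}$ are fixed to meet these requirements, the verification reduces to a finite check over deviating partitions, confirming the MCDT and the absence of any wMCD.
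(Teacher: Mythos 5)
Your high-level diagnosis is exactly right: the separation comes from the fact that an MCDT only constrains the \emph{sum} of the deviators' guaranteed values, so one deviating coalition may be strictly worse off provided another gains more, which no wMCD can tolerate. But the statement is an existence claim, and your proposal never actually exhibits a game and an outcome --- it is a design brief, and by your own account the design is not yet known to be realisable (``the hard part will be making all these guaranteed-prize inequalities simultaneously satisfiable'', the ``cascade'' worry, the ``plausibly, a tailored tie-breaking order''). As it stands there is no witness to check, so the proof is incomplete; worse, the specific architecture you commit to (a sacrificial coalition $C_3$ with $x(C_3)>0$ that is pushed to payoff $0$, with the residual players retaining positive payoff so that $x(D)<p_1+p_2$) is the harder of the two ways to realise the phenomenon, and you give no evidence that the blocking conditions you list (e.g.\ that returning $C_3$'s weight to the residual pool knocks $C_2$ out of the prizes) can all be met at once.

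The paper's construction shows the obstacle you anticipate can be sidestepped entirely. Take $G=((7,4,4,2);(2,1.1,1.1))$ and the outcome $([1\mid 2\mid 3\mid 4],(2,1.1,1.1,0))$: every prize is claimed by a singleton, so the outcome is utilitarian and the payoff vector is rigidly tied to the prize values, which makes the wMCD-stability check a short finite case analysis over $|D|\in\{1,2,3,4\}$ with no tie-breaking gymnastics. The MCDT is the deviation $[2\ 3\mid 4]$: coalition $\{2,3\}$ wins the first prize, worth $2<2.2=x(\{2,3\})$, i.e.\ the sacrifice is a coalition accepting a \emph{smaller} prize (not being pushed to zero), while $\{4\}$ picks up the vacated third prize, for a joint gain of $3.1>2.2$. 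If you want to salvage your write-up, the most direct fix is to abandon the ``$C_3$ pushed to zero with residual players holding surplus'' template and instead start from a utilitarian outcome in which a merger of two prize-winners frees up a prize for a currently unpaid player; then supply the explicit numbers and the case check.
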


\begin{proof}
    Since $wMCD(G) \subseteq MCDT(G)$ by definition, we show that there exists a game $G$ and an outcome $\omega$ such that $\omega \in MCDT(G)$ and $\omega \not \in wMCD(G)$.
    
    Consider the 4-player, 3-prize game $G = ((7, 4, 4, 2); (2, 1.1, 1.1))$ and the outcome $\omega = ([1\ |\ 2\ |\ 3\ |\ 4], (2, 1.1, 1.1, 0))$. We show that $\omega$ is susceptible to an MCDT. Consider the deviating partition $[2\ 3\ |\ 4]$: the only coalition structure that can be induced by the deviating partition is $([2\ 3\ |\ 1\ |\ 4])$, where the deviating coalition would earn the 1st and 3rd prize ($2 + 1.1 = 3.1$), exceeding their current payoff ($x(\set{2,3,4}) = 2.2$). Note that this particular deviation is not a wMCD, since players 2 and 3 would have no incentive to deviate ($x(\set{2,3}) = 2.2 > 2 = p_1$) without some transfer from player~4.
    
    Suppose for contradiction that $\omega$ is also susceptible to a wMCD, for some deviating coalition $D \subseteq N$. It cannot be the case that $|D| = 4$ (i.e., everyone jointly deviates), because all prizes are claimed, and hence no deviation can lead to a strict improvement for some deviating coalition and no loss to others. Also, it cannot be the case that $|D| = 1$, since each player already forms a singleton coalition.
    
    Consider the case $|D| = 2$: $D$ cannot form two singleton deviating coalitions, since they could do no better than the existing outcome; For $D$ to form a single deviating coalition of two members, the coalition must be able to achieve a prize that is strictly better than its current payoff. Clearly, any pair involving player 1 can do no better (player 1 already wins the largest prize), nor does any pair involving player 4, since the other player can do no better. Finally, the pair $(2, 3)$, who can form a coalition larger than $\set{1}$, would not benefit by doing so, as the first prize is smaller than their current joint payoffs under $\omega$.
    
    Finally, consider the case $|D| = 3$: The deviating players cannot form a partition of the form $[i\ j\ k]$ because there is no single prize that exceeds any three players' combined payoff, nor can they form $[i\ |\ j\ |\ k]$ because such a deviation would result in the same outcome. We are left with the case of $[i\ j\ |\ k]$. From the case $|D| = 2$, we saw that no pair $[i\ j]$ could benefit from a deviation, and yet no singleton $[k]$ could strictly improve their payoff. Therefore, there are no feasible deviations with $|D| = 3$. Hence, $\omega$ is not susceptible to a wMCD and is susceptible to an MCDT.
\end{proof}

\section{Characterising stability} \label{sec:characterising_stability}

Before we explore the computational complexity of finding stable outcomes in general $k$-WVGs, we show that for two classes of $k$-WVGs this is relatively straightforward.

\vspace*{1ex}\noindent\textbf{3-player 2-prize games:}
We first consider the special class of games with 3-players and 2-prizes and show that we can easily decide whether such games have a stable outcome, and identify one such outcome. Then, without loss of generality, we normalise the prize values in such games to $(R, 1)$, where $R > 1$.

\begin{restatable}{proposition}{Threeplayertwoprizegame} \label{prop:3_player_2_prize_game}
    A $k$-WVG with 3 players and 2 prizes, where players are labelled in descending order according to $\succ$, has an SCD/ MCD-stable outcome if and only if $\{1\} \succ \{2, 3\}$, or $\{2, 3\} \succ \{1\}$ and $R \leq 2$. In the first case, an SCD/ MCD-stable outcome must have payoff $\vec{x} = (R, 1, 0)$; in the second case an SCD/ MCD-stable outcome must have payoff $\vec{x} = (1, 1, R-1)$.
    
    A game has a wMCD/ MCDT-stable outcome if and only if $\{1\} \succ \{2, 3\}$, and a wMCD/ MCDT-stable outcome must have payoff $\vec{x} = (R, 1, 0)$.
\end{restatable}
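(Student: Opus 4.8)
The plan is to reduce everything to payoff vectors via Remark~\ref{rem:irrelevance_of_coalition_structure} and to exploit the core hierarchy of Remark~\ref{rem:hierarchy}, so that the whole argument rests on a single finite computation for the coarsest concept, SCD. For each nonempty $C \subseteq \{1,2,3\}$ I would first compute its \emph{guaranteed value} $\gamma(C) := \min_{\pi' : C \in \pi'} v(C,\pi')$, the worst-case prize $C$ secures against a maximally pessimistic residual partition; a coalition $C$ is an SCD from $\vec x$ exactly when $x(C) < \gamma(C)$. With only three players this is immediate: from $w_1 \ge w_2 \ge w_3 > 0$ one gets $\{1,2\} \succ \{3\}$ and $\{1,3\} \succ \{2\}$, whence $\gamma(\{2\}) = 1$, $\gamma(\{3\}) = 0$, and $\gamma(\{1,2\}) = \gamma(\{1,3\}) = \gamma(\{1,2,3\}) = R$. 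The only quantities depending on the case split are $\gamma(\{1\})$ and $\gamma(\{2,3\})$: we get $\gamma(\{1\}) = R,\ \gamma(\{2,3\}) = 1$ when $\{1\}\succ\{2,3\}$, and $\gamma(\{1\}) = 1,\ \gamma(\{2,3\}) = R$ when $\{2,3\}\succ\{1\}$, since the residual players always choose the grouping that demotes the deviator.

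Second, I would turn SCD-stability into the linear system $x(C) \ge \gamma(C)$ for all $C$, together with the budget bound $x(N) \le p_1 + p_2 = R+1$ (the grand coalition, awarding only $R$, is ruled out instantly). When $\{1\}\succ\{2,3\}$ the binding constraints $x_1 \ge R$ and $x_2 \ge 1$ already force $x(N) \ge R+1$, hence equality and $\vec x = (R,1,0)$. When $\{2,3\}\succ\{1\}$, combining $x_2 + x_3 \ge R$ with the budget gives $x_1 \le 1$, so $x_1 = 1$; symmetrically $x_2 = 1$, and then $x_3 = R-1$, which is feasible (nonnegative and respecting $x_1 + x_2 = 2 \ge R$) precisely when $R \le 2$, and impossible otherwise. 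Each surviving payoff is realised by an explicit partition ($[1\ |\ 2\ |\ 3]$ and $[2\ 3\ |\ 1]$ respectively), completing the SCD claim. For MCD, because $\textit{MCD-core}(G) \subseteq \textit{SCD-core}(G)$, it remains only to check that each unique SCD-stable payoff resists every multi-coalition deviation; enumerating the handful of deviating partitions and verifying that none leaves all its coalitions strictly better off yields $\textit{SCD-core}(G) = \textit{MCD-core}(G)$.

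Third, for wMCD and MCDT I would again lean on $\textit{MCDT-core}(G) \subseteq \textit{wMCD-core}(G) \subseteq \textit{MCD-core}(G)$, so the only candidate payoffs are the two found above. When $\{1\}\succ\{2,3\}$ I would verify directly that $(R,1,0)$ is MCDT-stable --- for every deviating partition $\pi_D$ of a set $D$ the total deviator value $\sum_{C \in \pi_D} v(C,\pi')$ never strictly exceeds $x(D)$, a short enumeration --- and since MCDT is the strongest concept this also certifies wMCD-stability, giving $\textit{wMCD-core}(G) = \textit{MCDT-core}(G) = \{(R,1,0)\}$. When $\{2,3\}\succ\{1\}$ (so $R \le 2$ and the only surviving payoff is $(1,1,R-1)$) I would exhibit an explicit wMCD to destroy it: the deviating partition $[1\ |\ 2]$ on $D = \{1,2\}$ has the unique extension $[1\ |\ 2\ |\ 3]$, in which player $1$ wins the top prize $R > 1 = x_1$ (a strict gain) while player $2$ retains exactly $1 = x_2$; both wMCD conditions hold, so $(1,1,R-1)$ is not wMCD-stable and both cores are empty in this case. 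Assembling the two cases gives the stated characterisation.

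The delicate point --- and the one worth stating carefully --- is this last wMCD. In the case $\{2,3\}\succ\{1\}$ player $1$ is ``underpaid'' at $x_1 = 1$, yet it has no profitable single-coalition deviation, because as a lone singleton the residual pair can regroup as $\{2,3\} \succ \{1\}$ and demote player $1$ to the small prize, so $\gamma(\{1\}) = 1$. The wMCD circumvents exactly this: recruiting player $2$ to deviate \emph{as its own singleton} forces the residual player $3$ into a singleton too, which guarantees player $1$ the large prize, with player $2$ merely indifferent. This weak-improvement structure is invisible to SCD and MCD but caught by wMCD, and it is the only genuinely non-mechanical step; everything else is finite enumeration whose sole pitfall is getting the maximally pessimistic residual response right in each $\gamma(C)$.
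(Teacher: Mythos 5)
Your proof is correct, and it reaches the same conclusions by the same overall skeleton (reduce to payoff vectors, settle SCD first, then climb the hierarchy of Remark~\ref{rem:hierarchy}), but your treatment of the SCD part is organised differently from the paper's. The paper argues three cases separately ($\set1\succ\set{2,3}$; $\set{2,3}\succ\set1$ with $R\le 2$; $\set{2,3}\succ\set1$ with $R>2$), in each one exhibiting the candidate payoff and ruling out alternatives ad hoc. You instead introduce the guaranteed value $\gamma(C)=\min_{\pi'\,:\,C\in\pi'}v(C,\pi')$, observe that SCD-stability is exactly the system $x(C)\ge\gamma(C)$ together with the budget $x(N)\le R+1$, and read off uniqueness of $(R,1,0)$, uniqueness of $(1,1,R-1)$, and the emptiness threshold $R\le 2$ (via $x_1+x_2=2\ge\gamma(\set{1,2})=R$) as consequences of one linear system. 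This buys a uniform derivation of all three of the paper's SCD cases at once and makes the role of the tie-break transparent (it only moves $\gamma(\set1)$ and $\gamma(\set{2,3})$); the cost is that you still have to fall back on finite enumeration for the stronger concepts, exactly as the paper does. Your wMCD witness $[1\,|\,2]$ against $(1,1,R-1)$ and your MCDT check for $(R,1,0)$ coincide with the paper's arguments (v) and (i); the only step you leave compressed is the MCD-stability of $(1,1,R-1)$, which you defer to ``a short enumeration'' --- that enumeration is precisely the paper's Case~2(i) and does go through, so there is no gap.
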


\begin{proof}[Sketch of proof]
    We provide an outline of the proof, consisting of all the cases that need to be considered.\footnote{The full proof is provided \ifarXiv in Appendix A \else in appendix A of [citation] \fi} 
    
    Firstly, observe that there are only two tie-breaking orders consistent with $\set 1\succ\set 2\succ\set 3$, namely, one with $\set 1\succ\set{2,3}$ and one with $\set{2,3}\succ \set 1$.
	
	The first case we consider is when $\set 1\succ\set{2,3}$. For this, it suffices to show that: (i) $\vec{x} = (R, 1, 0)$ is MCDT-stable and (ii) $\vec{x} = (R, 1, 0)$ is the only SCD-stable payoff vector. The next case to consider is when $\set{2,3}\succ \set 1$ and $1 < R \leq 2$. In this case, we can establish that: (i) $\vec{x} = (1, 1, R-1)$ is MCD-stable; (ii) $\vec{x} = (1, 1, R-1)$ is not wMCD-stable; and (iii) $\vec{x} = (1, 1, R-1)$ is the only SCD-stable payoff. Finally, the last case is when $\set{2,3}\succ \set 1$ and $R > 2$. In this scenario, a simple contradiction argument suffices to show that no payoff vector $\vec{x}$ is SCD-stable.

	Exploiting Remark~\ref{rem:hierarchy}, these findings give us our result, which is also summarised in Table~\ref{tab:3_player_2_prize}.    
\end{proof}

\begin{table}[t]
    \centering
    \caption{Payoff vector in the core of 3-player 2-prize games for each case. $\emptyset$ signifies an empty core.}
    \begin{tabular}{r@{\,$\succ$\,}lcc}
         \toprule
        \multicolumn{2}{c}{Case}           & SCD/ MCD-core         & wMCD/ MCDT-core \\
         \midrule
         $\{1\}$ & $\{2, 3\}$                 & $(R, 1, 0)$           & $(R, 1, 0)$  \\
         $\{2, 3\}$ & $\{1\}, R \leq 2$       & $(1, 1, R-1)$         & $\emptyset$  \\
         $\{2, 3\}$ & $\{1\}, R > 2$          & $\emptyset$           & $\emptyset$  \\
         \bottomrule
    \end{tabular}
    \label{tab:3_player_2_prize}
\end{table}

\begin{remark} [Emptiness of the core]
    The results above (summarised in Table \ref{tab:3_player_2_prize}) illustrate that the core of a game, in relation to any deviation concept, can be empty.
\end{remark}

\vspace*{1ex}\noindent\textbf{Uniform-prize games:} 
We now consider a special class of $k$-WVG where all prizes are of equal value (normalised to 1), which we refer to as \textit{$k$-uniform-prize WVGs}. We show that for this special class of games, deciding whether a game has an SCD-stable outcome can be easily done via a recursive argument. The following definitions for uniform-prize games establish the necessary concepts for our main result.

\begin{definition} [Singleton winner]
    A player $i$ is a \textit{singleton winner} of a $k$-uniform-prize WVG if it can win a prize as a singleton coalition, regardless of the coalition structure of the remaining players, i.e., for all $\pi \in \{\pi' \in \Pi(N) \mid \{i\} \in \pi'\}$, we have $v(\{i\}, \pi) = 1$.
\end{definition}

\begin{lemma} \label{lem:singleton_winners}
    For any $k$-uniform-prize WVG, if $n = k$, then all players are singleton winners; if $n > k$, a player $i \in N$ is a singleton winner if and only if $w_i > w^*$ where $w^* = \frac{1}{k + 1} w(N)$, or $w_i = w^*$ and for every partition $\pi$ of the players $N \setminus \{i\}$ into $k$ coalitions, there is a coalition $C_j \in \pi$ such that $\{i\} \succ C_j$.
\end{lemma}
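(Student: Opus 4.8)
The plan is to recast the definition of a singleton winner as a packing question about the residual players $N \setminus \{i\}$. Observe that in a partition $\pi$ with $\{i\} \in \pi$, the singleton $\{i\}$ fails to win a prize exactly when at least $k$ blocks of $\pi$ are ranked strictly above $\{i\}$ under $\succ$; since $\{i\}$ is itself one block, all such blocks lie inside $N \setminus \{i\}$. Conversely, given any $k$ pairwise-disjoint subsets $C_1,\dots,C_k \subseteq N \setminus \{i\}$ with $C_j \succ \{i\}$ for every $j$, I can extend them to a partition of $N \setminus \{i\}$ by placing every leftover player into $C_1$: this only raises $w(C_1)$ and hence preserves $C_1 \succ \{i\}$. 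So $i$ is a singleton winner if and only if one cannot pack $k$ pairwise-disjoint subsets of $N \setminus \{i\}$, each beating $\{i\}$.

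I would then dispose of $n = k$ immediately (there $N \setminus \{i\}$ has only $k-1$ players, admits at most $k-1$ blocks, so the packing is impossible and every player wins), and for $n > k$ split on $w_i$ versus $w^* = w(N)/(k+1)$, using that $w_i > w^*$ is equivalent to $k w_i > w(N \setminus \{i\})$ and $w_i = w^*$ to $k w_i = w(N \setminus \{i\})$. Any beating subset $C$ has $w(C) \geq w_i$. If $w_i > w^*$, then $k$ disjoint beating subsets would carry total weight $\geq k w_i > w(N \setminus \{i\})$, more than is available, so the packing is impossible and $i$ is a singleton winner. If $w_i = w^*$, the same bound gives total weight $\geq k w_i = w(N \setminus \{i\})$, which forces the subsets to exhaust $N \setminus \{i\}$ and to have weight exactly $w_i$ each; hence a packing exists iff $N \setminus \{i\}$ splits into exactly $k$ coalitions of weight $w_i$ that all win their tie-break against $\{i\}$. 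Negating, $i$ is a singleton winner iff every $k$-coalition partition of $N \setminus \{i\}$ has some block $C_j$ with $\{i\} \succ C_j$, which is the stated condition. The averaging remark that a block of weight $> w_i$ in a $k$-partition forces a companion block of weight $< w_i$ is what confirms that ``all blocks beat $\{i\}$'' can occur only when every block has weight exactly $w_i$.

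The main obstacle is the remaining regime $w_i < w^*$, i.e. $k w_i < w(N \setminus \{i\})$, where I must show $i$ is \emph{not} a singleton winner, i.e. that the packing of $k$ beating subsets always exists. I would attempt a greedy/exchange construction: repeatedly peel off a minimal subset of the as-yet-unused residual players whose weight exceeds $w_i$, and argue from the surplus $w(N \setminus \{i\}) > k w_i$ that $k$ such blocks are produced before the weight budget runs out. I expect this to be the delicate step, since—unlike the pure weight-counting in the other two cases—it is a genuinely combinatorial (bin-covering-flavoured) construction: the argument must control configurations in which one heavy residual player absorbs most of the surplus while the remaining players are individually too light to assemble the required number of beating blocks. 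This is where I would concentrate the bookkeeping, and I would stress-test it at the boundary $n = k+1$, where $N \setminus \{i\}$ has exactly $k$ players and every beating block is forced to be a singleton.
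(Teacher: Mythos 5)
The parts you completed are correct and essentially coincide with the paper's own argument: the paper likewise disposes of $n=k$ by counting blocks, handles $w_i > w^*$ by noting that $w(N\setminus\{i\}) < kw^*$ cannot be split into $k$ coalitions each outweighing $\{i\}$, and reduces $w_i = w^*$ to the tie-break condition on partitions into $k$ blocks of weight exactly $w^*$. Your packing reformulation (extend $k$ disjoint beating subsets to a partition of $N\setminus\{i\}$ by absorbing the leftover players into one of them) is a slightly cleaner way of organising the same weight-counting.

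The gap you flagged in the regime $w_i < w^*$ is genuine, and no amount of greedy or exchange bookkeeping will close it, because that direction of the equivalence is false. Take $k=2$ with uniform prizes and weights $\vec{w} = (10,3,1,1)$, so $w(N)=15$ and $w^* = 5$. Player $2$ has $w_2 = 3 < w^*$, yet is a singleton winner: two disjoint subsets of $N\setminus\{2\} = \{1,3,4\}$ that both beat $\{2\}$ would each need weight at least $3$, but only one of them can contain player $1$, and the other is then a subset of $\{3,4\}$ with weight at most $2$. This is precisely the configuration you anticipated, where one heavy residual player absorbs the surplus: the inequality $w(N\setminus\{i\}) > kw_i$ guarantees enough total mass but not that it can be divided into $k$ blocks each outweighing $w_i$. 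For what it is worth, the paper's own proof of this lemma has the same hole --- it argues only that $w_i > w^*$ (resp.\ $w_i = w^*$ plus the tie-break condition) suffices, and never shows that $w_i < w^*$ precludes being a singleton winner --- so the ``only if'' half of the statement is not established there either. Your instinct to stress-test at $n=k+1$ (and, more generally, at residual weight distributions dominated by a single heavy player) was exactly the right one.
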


\begin{proof}
    If $n = k$, then at most $k$ coalitions can be formed by all the players, and therefore each player is guaranteed a prize. If $n > k$, then a player $i$ can secure one of the $k$ prizes as a singleton if $w_i > w^*$, because the collective weight of the remaining players is $w(N \setminus \{i\}) = w(N) - w_i < w(N) \frac{k}{k+1} = k w^*$ and cannot form $k$ coalitions each of size at least $w^*$. If instead $w_i = w^*$, then $w(N \setminus \{i\}) = k w^*$ and it may be possible for the remaining players to form exactly $k$ coalitions each of size $w^*$. In this case, the singleton player $i$ can win one of the $k$ prizes if and only if for all partitions of $N \setminus \{i\}$ into $k$ coalitions of weight $w^*$, the singleton beats (on tie-break) at least one of the $k$ coalitions in the partition.
\end{proof}

\begin{definition} [Reduction \& Irreducible game]
Let $s$ be the set of singleton winners of a $k$-uniform-prize WVG $G$. A \textit{reduction} of $G$ is the removal of all singleton winners, and the removal of the same number of prizes as there are singleton winners, from the game. If $k > |s|$, the reduction of a game $G$ results in a uniform-prize game $G'$ with $n-|s|$ players and $k-|s|$ prizes, with the remaining players having the same weight as in $G$; if $k = |s|$, then no prizes are left after the reduction and there is no reduced game. A game with no singleton winner is \textit{irreducible}.
\end{definition}

\begin{corollary} \label{cor:irreducible_game}
    Let $w^* = \frac{1}{k + 1} w(N)$. In an irreducible game, either i) each player has weight $< w^*$, or ii) one player $i$ has weight $w_i = w^*$, all remaining players each have weight $< w^*$, and the remaining players can form $k$ coalitions each of weight $w^*$ and beat $\{i\}$ on tie-break. 
\end{corollary}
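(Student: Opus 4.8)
The plan is to derive the corollary directly from Lemma~\ref{lem:singleton_winners} by unpacking what it means for a game to have no singleton winners. First I would observe that an irreducible game with at least one player must satisfy $n > k$: the lemma states that if $n = k$ then every player is a singleton winner, so such a game is never irreducible. This justifies applying the $n > k$ characterisation from the lemma throughout.

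Next, since no player is a singleton winner, I would negate the lemma's condition for each player $i$. The first disjunct $w_i > w^*$ must fail for every $i$, immediately giving $w_i \le w^*$ for all players. If no player attains $w_i = w^*$, then every weight is strictly below $w^*$ and we are in case (i). The remaining work is to analyse a player $i$ with $w_i = w^*$. Since $i$ is not a singleton winner while $w_i = w^*$, the second disjunct of the lemma must also fail, so there is a partition of $N \setminus \{i\}$ into $k$ coalitions $C_1, \dots, C_k$ none of which $\{i\}$ beats; by totality of $\succ$ this forces $C_j \succ \{i\}$ for every $j$, hence $w(C_j) \ge w^*$. Because $w(N \setminus \{i\}) = (k+1)w^* - w^* = k w^*$ equals $\sum_j w(C_j)$, each inequality must be tight, so every $C_j$ has weight exactly $w^*$ and beats $\{i\}$ on tie-break, which is precisely the structure claimed in case (ii).

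Finally, the key step is to show that at most one player can have weight $w^*$, so that case (ii) is well posed with a single such player and all others strictly lighter. I would argue by contradiction: if both $i$ and $j$ had weight $w^*$, then applying the previous paragraph to $i$ yields a weight-$w^*$ coalition in the partition of $N \setminus \{i\}$ containing $j$; since weights are strictly positive, this coalition must be the singleton $\{j\}$, giving $\{j\} \succ \{i\}$. Symmetrically, applying the argument to $j$ yields $\{i\} \succ \{j\}$. These two relations contradict the antisymmetry of the strict total order $\succ$.

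I expect this uniqueness argument, namely extracting the forced singleton from the weight accounting and then exploiting antisymmetry of the tie-break, to be the only non-mechanical part of the proof; everything else is bookkeeping that reads off the cases of Lemma~\ref{lem:singleton_winners}.
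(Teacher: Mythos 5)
Your proof is correct and follows the same route the paper intends: the corollary is stated without an explicit proof as a direct consequence of Lemma~\ref{lem:singleton_winners}, obtained by negating the singleton-winner characterisation for $n>k$ exactly as you do, with the weight-accounting step forcing each residual coalition to have weight exactly $w^*$. Your uniqueness argument for the weight-$w^*$ player (extracting the forced singleton $\{j\}$ and then invoking antisymmetry of $\succ$) fills in the one detail the paper leaves implicit, and it is sound given that weights are strictly positive.
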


\begin{definition} [ISW]
    An \textit{iterative singleton winner} (ISW) of a game $G$ is one of the singleton winners removed in the iterative reduction of $G$ until the game cannot be further reduced.
\end{definition}

Inspired by the concept of maximin share as it occurs in the literature on fair allocations (see, e.g., \cite{budish2011MMS}), we now define the notion of an MMS partition and an MMS set.

\begin{definition} [MMS partition \& MMS set]
    A partition $\pi \in \Pi(N)$ is a \textit{maximin share (MMS) partition} of players $N$ into $l$ sets if it maximises the weight of the smallest set with respect to $\succ$. Because $\succ$ is a strict total order, an \textit{MMS partition} always exists, and all MMS partitions have the same smallest set, which we refer to as the \textit{MMS set} or $MMS(N, l)$.
\end{definition}

Observe that for any set of players $S$ within an MMS partition that is not the smallest (i.e., $S \in \pi \setminus MMS(N, l)$), we have $MMS(N, l) = MMS(N\setminus S, l-1)$.

\begin{lemma} \label{lem:irreducible_game_empty_SCD_core}
    The SCD-core of an irreducible game is empty.
\end{lemma}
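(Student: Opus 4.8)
The plan is to reduce the claim to payoff vectors, use the maximin partition of $N$ into $k+1$ parts to pin down any candidate stable payoff, and then exhibit a single-coalition deviation that exploits the resulting zero-payoff players. First I would record the reformulation. By Remark~\ref{rem:irrelevance_of_coalition_structure} it suffices to show that every feasible payoff vector $\vec x$ admits an SCD, and since all prizes equal $1$, a coalition $C$ is an SCD from $\vec x$ exactly when (a) $C$ is a \emph{guaranteed winner}, i.e.\ $v(C,\pi')=1$ for every $\pi'$ with $C\in\pi'$, and (b) $x(C)<1$. I would then note two facts about guaranteed winners: if $w(C)>w^*$ then $C$ is a guaranteed winner (the residual weight $w(N\setminus C)<kw^*$ cannot be split into $k$ coalitions each $\succeq C$); and, using the identity recorded after the definition of the MMS set, $C\succ MMS(N\setminus C,k)$ also implies $C$ is a guaranteed winner.

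The backbone is as follows. Take a maximin partition $\{D_1,\dots,D_k,M\}$ of $N$ into $k+1$ parts, with $M=MMS(N,k+1)$ the smallest. Each $D_j$ $(j\le k)$ is a non-smallest part, so the displayed identity gives $MMS(N\setminus D_j,k)=MMS(N,k+1)=M$, and since $D_j\succ M$ every $D_j$ is a guaranteed winner. Hence if $\vec x$ is SCD-stable then $x(D_j)\ge 1$ for all $j$, and because $\sum_{j\le k}x(D_j)+x(M)=\sum_i x_i\le k$, this forces $x(M)=0$, $x(D_j)=1$ for every $j$, and $\sum_i x_i=k$. Moreover each $D_j$ has at least two members, since a singleton guaranteed winner would be a singleton winner, contradicting irreducibility.

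It then remains to derive a contradiction by producing an SCD from the zero-payoff players, and here I would split along Corollary~\ref{cor:irreducible_game}. In case~(ii) we have $M=\{i^*\}$ with $w_{i^*}=w^*$ and $x_{i^*}=0$; as case~(ii) needs at least $2k+1$ players (each of the $k$ unit-weight residual coalitions has $\ge 2$ members), more than $k$ players lie outside $M$, so some $p$ has $x_p<1$, and then $\{i^*,p\}$ has weight $>w^*$ and payoff $x_p<1$ — an SCD. In case~(i) (all weights $<w^*$) I would take the set $Z\supseteq M$ of zero-payoff players, note $w(Z)\le w^*$ (otherwise $Z$ itself is a deviation), and augment $Z$ with positive-payoff players of least payoff until the weight first exceeds $w^*$; the result is a guaranteed winner, and the goal is to show its payoff stays below $1$.

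The main obstacle is exactly this weight-versus-payoff bookkeeping in case~(i). A fractional estimate shows that the cheapest way to raise $Z$ above weight $w^*$ costs at most $1-w(Z)/w^*<1$ in payoff, but converting this to an \emph{integral} coalition is delicate, since the last player added may overshoot both the weight threshold and the payoff budget. The key will be to use irreducibility — which forbids any singleton winner, forces every top part $D_j$ to have at least two members, and, through the two cases of Corollary~\ref{cor:irreducible_game}, bounds how heavy a single player can be — to guarantee that a sufficiently light positive-payoff player (or small group) is always available, so that the augmented coalition clears weight $w^*$ while its total payoff remains strictly below $1$, completing the contradiction.
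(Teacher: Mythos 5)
Your skeleton matches the paper's: reduce to payoff vectors, take the maximin partition $\{D_1,\dots,D_k,M\}$ of $N$ into $k+1$ parts, use $MMS(N\setminus D_j,k)=MMS(N,k+1)$ to certify each $D_j$ as a guaranteed winner, and thereby force $x(D_j)=1$ and $x(M)=0$ in any candidate SCD-stable outcome. Your treatment of case~(ii) of Corollary~\ref{cor:irreducible_game} is correct and self-contained (the coalition $\{i^*,p\}$ clears the weight threshold outright), and your observation that no part of the maximin partition can be a singleton in an irreducible game is a nice touch.

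The gap is case~(i), and it is genuine: you commit to certifying the deviating coalition via the weight threshold $w(C)>w^*$, and that is the wrong tool, which is why your bookkeeping does not close. The paper never tries to push the augmented coalition over $w^*$. When some part $S_d$ contains at least two positive-payoff players, it removes the lightest such player $i$ and deviates with $C=S_d\cup M\setminus\{i\}$, whose payoff is $1-x_i<1$ automatically, and certifies $C$ as a guaranteed winner by an exchange argument on the maximin partition (swap $i$ with $M$, or append $S_d\cup M$ to a maximin $k$-partition of the complement; either way the maximality of $M$ is contradicted). That is exactly the second ``fact about guaranteed winners'' you record at the outset and then never deploy. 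Your weight-based route can in fact be salvaged, but not by the means you name: one needs the further observation that in an irreducible game $M\succeq\{i\}$ for \emph{every} player $i$ (append $\{i\}$ to a maximin $k$-partition of $N\setminus\{i\}$ and use that $i$ is not a singleton winner), hence $w(M)\ge\max_i w_i$; then a load-balancing lower bound showing the maximin value is large when every player is light; and finally a case analysis over how many parts contain two or more positive-payoff players. None of this follows from ``Corollary~\ref{cor:irreducible_game} bounds how heavy a single player can be,'' so as written the proof is incomplete at precisely the step you flag.
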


\begin{proof}
    Let $G$ be an irreducible game with $n$ players and $k$ prizes, and let the `winning threshold' $w^* = \frac{1}{k + 1} w(N)$. We show that we can find an SCD in any outcome $\omega = (\pi, \vec{x})$. By Remark \ref{rem:irrelevance_of_coalition_structure}, we need only to consider the payoff vector $\vec{x}$ of an outcome and not its coalition structure $\pi$.
    
    Consider an MMS partition $\phi$ of $N$ into $k + 1$ sets. Label the $k + 1$ sets of players as $S_1, \dots, S_{k+1}$ in descending order, i.e., $S_a \succ S_b$ for all $a < b$, and $S_{k+1} = MMS(N, k+1)$. Note that $S_{k+1}$ cannot be empty since $n \geq k+1$ in an irreducible game. We now construct an SCD by considering each $S_d$ where $d \in \set{1,\ldots,k}$ (i.e. excluding the smallest set $S_{k+1}$). 
    
    Suppose there exists $d$ such that $x(S_d) < 1$, then $S_d$ can form an SCD. To see this, note that $S_{k+1} = MMS(N, k+1) = MMS(N \setminus~S_d, k)$, and that $S_d \succ S_{k+1}$. This means that under any partition of $N \setminus S_d$ into $k$ sets, $S_d$ will always beat the smallest set and win one of the $k$ prizes. $S_d$ can therefore guarantee a prize value of $1 > x(S_d)$ by deviating.
    
    If there does not exist $d$ such that $x(S_d) < 1$, then it must be the case that $x(S_d) = 1$ for all $d \in \set{1,\ldots,k}$ and $x(S_{k+1}) = 0$, because the total prize value to be shared amongst all players is exactly $k$. We consider two cases of the distribution of $\vec{x}$. First is the case where $x_i \in \{0, 1\}$ for all $i \in N$, i.e., the payoff to each player is either $1$ or $0$. This can occur either because all winning coalitions are singletons, or because some winning coalitions contain all but one zero-payoff players. Since there are $k$ prizes, there are exactly $k$ players with payoff $1$, and by Corollary \ref{cor:irreducible_game}, either a) each winning player has weight less than $w^*$, or b) at most one has weight $w^*$ but loses on all tie-breaks. If $k > 1$, the total weight of all zero-payoff players must exceeds $w(N) - k w^* = w^*$, and these players can guarantee to win a prize by forming an SCD. If $k = 1$ and the winning player has weight $w^*$, then by Corollary \ref{cor:irreducible_game} the zero-payoff players can also form an SCD of size $w^*$ and take the prize from the winning player.
    
    Now consider the case where $x_i \in (0,1)$ for some $i \in N$, i.e., there is some set $S_d$ in which at least 2 players have a positive payoff. Let the smallest of these players be player $i$ with weight $w_i$. We argue that $S_d \cup S_{k+1} \setminus \{i\}$ can form an SCD. First note that because $x_i > 0$, $S_d \setminus \{i\}$ can incentivise $S_{k+1}$ to join the SCD and improve its own payoffs by sharing the gain from removing $i$. Moreover, we show that the residual players cannot prevent this deviating coalition from winning a prize by considering two cases: $\{i\} \succ S_{k+1}$ and $\{i\} \prec S_{k+1}$. 
    
    If $\{i\} \succ S_{k+1}$, then we can swap $i$ and $S_{k+1}$ in an MMS partition and increase the weight of the smallest coalition in the partition (since both $S_d \cup S_{k+1} \setminus \{i\} \succ S_{k+1}$ and $\{i\} \succ S_{k+1}$). This contradicts the premise that $MMS(N, k+1) = S_{k+1}$.
    
    If $\{i\} \prec S_{k+1}$, then we show that $S_d \cup S_{k+1} \succ S_{k+1} = MMS(N, k+1) \succ MMS(N \setminus (S_d \cup S_{k+1}), k)$: if the last ordering were false, i.e., $MMS(N \setminus (S_d \cup S_{k+1}), k) \succ MMS(N, k+1)$, then we can add $S_d \cup S_{k+1}$ as a set to the MMS partition of $N \setminus (S_d \cup S_{k+1})$ and achieve a better MMS set than $MMS(N, k+1)$, resulting in a contradiction. Since $S_d \cup S_{k+1} \succ MMS(N \setminus (S_d \cup S_{k+1}), k)$, $S_d \cup S_{k+1}$ can guarantee to win one of the $k$ prizes by forming an SCD.
    
    Having considered all cases, we have shown that there is no SCD-stable outcome in an irreducible game.
\end{proof}

\begin{theorem} [Stability of uniform-prize games]
    \label{thm:uniform-prize_stability}
    A $k$-uniform-prize WVG $G$ has an SCD-stable outcome (i.e., a non-empty SCD-core) if and only if there are $k$ iterative singleton winners. Moreover, an SCD-stable outcome has the unique payoff $\vec{x} = (x_1, \ldots x_n)$ such that 
    \[x_i = 
    \begin{cases}
        1 \quad \text{if $i$ is one of the $k$ iterative singleton winners}, \\
        0 \quad \text{otherwise.}
    \end{cases}
    \]
\end{theorem}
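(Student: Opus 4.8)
The plan is to prove the existence characterisation and the uniqueness claim simultaneously by induction on the number of prizes $k$, using the iterative reduction as the inductive device and Lemma~\ref{lem:irreducible_game_empty_SCD_core} as the base case. Throughout I would work with payoff vectors rather than outcomes (Remark~\ref{rem:irrelevance_of_coalition_structure}), and I would lean on two elementary bounds that hold for every outcome: $0 \le x_i \le 1$ for each player (a player lies in a single coalition, which wins at most one prize of value $1$), and $\sum_{i \in N} x_i \le k$ (the total payoff equals the number of winning coalitions). Since all prizes equal $1$ and a single-coalition deviation forms one coalition $C$, an SCD from $\vec x$ is exactly a coalition $C$ with $x(C) < 1$ that is guaranteed a prize, i.e.\ $v(C,\pi') = 1$ for every completion $\pi' \supseteq \set C$.

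Two building blocks come first. I would first show that in any SCD-stable outcome every singleton winner $i$ receives $x_i = 1$: if $x_i < 1$ then $\set i$ is itself guaranteed a prize and constitutes an SCD, so $x_i \ge 1$; together with $x_i \le 1$ this forces equality. Second, I would check that the candidate payoff of the theorem is realisable: partitioning $N$ into exactly $k$ blocks, each containing exactly one of the $k$ iterative singleton winners (possible as $n \ge k$), gives a partition in which all $k$ coalitions win, each ISW takes a full prize, and every other player receives $0$.

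The technical heart is a reduction lemma relating guaranteed prizes in $G$ to those in its reduction. Writing $s$ for the singleton winners of $G$, $|s| = m \ge 1$, $w^* = \frac{1}{k+1}w(N)$, and $G'$ for the reduction (with $k' = k - m$ prizes on $N' = N \setminus s$), I would prove that for every $C \subseteq N'$, $C$ is guaranteed a prize in $G$ if and only if it is guaranteed a prize in $G'$ (with the convention that when $k' = 0$ nothing is guaranteed, recovering the $m = k$ situation). The argument splits on $w(C)$. If $w(C) > w^*$, then $w(N) - w(C) < k\,w(C)$, so the residual players cannot form $k$ coalitions each beating $C$; the same inequality, using $w(N') \le (k'+1)w^*$, shows $C$ is guaranteed in $G'$ too, so both sides hold. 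If $w(C) < w^*$, then $w_i \ge w^* > w(C)$ for each $i \in s$, so each singleton winner beats $C$ on its own; an exchange argument (splitting any residual block containing a singleton winner into that singleton plus the rest never decreases the number of blocks beating $C$) shows the maximally-pessimistic residual partition may be taken to keep all of $s$ as singletons. Hence $N \setminus C$ can form $k$ blocks beating $C$ iff $N' \setminus C$ can form $k' = k-m$ of them, which is precisely the failure of $C$ to be guaranteed in $G'$. The delicate boundary case $w(C) = w^*$ — where singletons and residual blocks may tie with $C$ — is the main obstacle, and I would resolve it using the exact tie-break conditions of Lemma~\ref{lem:singleton_winners} and Corollary~\ref{cor:irreducible_game}.

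Finally I would assemble the three implications. For uniqueness, iterating the reduction lemma upward from each stage shows that every ISW $i$ has $\set i$ guaranteed a prize in $G$, hence $x_i = 1$; since there are $k$ ISWs and $\sum_i x_i \le k$, every remaining player gets $0$. For existence when there are $k$ ISWs, the realisable candidate payoff is SCD-stable: any SCD $C$ would have $x(C) < 1$, hence contain no ISW and survive every reduction, so chaining the reduction lemma down to the prize-free level contradicts $C$ being guaranteed a prize. For the converse, if there are fewer than $k$ ISWs the reduction terminates at an irreducible game $G^*$ with $k^* \ge 1$ prizes; given any outcome of $G$, Lemma~\ref{lem:irreducible_game_empty_SCD_core} supplies an SCD $C^* \subseteq N^*$ of the payoff induced on $N^*$, and the reduction lemma lifts $C^*$ to an SCD in $G$, so the SCD-core of $G$ is empty. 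The last point to verify is that the payoff induced on $N^*$ meets the hypotheses of Lemma~\ref{lem:irreducible_game_empty_SCD_core} ($0 \le x_i \le 1$ and total at most $k^*$), which follows from the bounds above and the fact that the $k - k^*$ removed ISWs each carry payoff exactly $1$.
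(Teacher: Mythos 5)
Your plan follows essentially the same route as the paper's proof: iterative reduction, Lemma~\ref{lem:irreducible_game_empty_SCD_core} as the engine for emptiness, the observation that every ISW must receive payoff $1$ in any stable outcome, and the bound $\sum_i x_i \le k$ to pin down the unique stable payoff. The substantive difference is that you isolate, as a two-sided ``reduction lemma,'' the transfer of the property ``guaranteed a prize'' between $G$ and its reduction $G'$ --- a step the paper dispatches in a single sentence (``the $\ell$ ISWs can at best only take the first $\ell$ prizes''). Two remarks. First, the boundary case $w(C)=w^*$ that you flag as the main obstacle in the harder direction of your lemma is genuinely delicate but does go through: one shows that all residual blocks and all singleton winners are then forced to have weight exactly $w^*$, and a minimality argument over the singleton winners with respect to $\succ$, combined with transitivity, contradicts either the singleton-winner condition of Lemma~\ref{lem:singleton_winners} or the assumption that $C$ is guaranteed in $G$. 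Second, and more usefully: you only ever need the direction ``guaranteed in $G'$ implies guaranteed in $G$,'' and that direction has a short proof with no case analysis on $w(C)$ at all --- any partition of $N\setminus C$ with at least $k$ blocks beating $C$ leaves, after deleting the $m$ singleton winners from their blocks, at least $k-m$ untouched blocks of $N'\setminus C$ that still beat $C$, contradicting $C$ being guaranteed in $G'$. For the backward implication of the theorem you can then dispense with the harder direction entirely: a coalition $C$ with $x(C)<1$ contains no ISW, and under the residual partition in which all $k$ ISWs sit as singletons, those singletons (each guaranteed a prize in $G$ by the easy direction, iterated) occupy all $k$ prize slots, so $C$ wins nothing and is not an SCD. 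This eliminates the $w(C)=w^*$ obstacle from your argument altogether; your verification that the payoff induced on the irreducible remainder satisfies the hypotheses needed by Lemma~\ref{lem:irreducible_game_empty_SCD_core} (after first disposing of the case where some ISW gets less than $1$) is exactly the right remaining check.
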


\begin{proof}
    For the forward direction, suppose for a contradiction that there exists a stable outcome in a game with $\ell < k$ ISWs. Then, we can iteratively reduce the game until we are left with an irreducible game $G'$. By Lemma \ref{lem:irreducible_game_empty_SCD_core}, the SCD-core of the irreducible game is empty. This means that every outcome in $G'$ is susceptible to an SCD by some coalition $C$. Now, let $\omega = (\pi,\vec{x})$ be an outcome in $G$. If any of the $\ell$ ISWs in $G$ do not receive a payoff of $1$, then they can form a singleton SCD. If all of the singleton winners receive a payoff of $1$, then because some coalition $C$ has an SCD to secure some prize $p_i \in \set{p_{\ell+1},\ldots,p_k}$ in the irreducible game $G'$, then they also have an SCD to secure $p_i$ in $G$, because the $\ell$ ISWs can at best only take the first $\ell$ prizes.
    
    For the backward direction, suppose that there are $k$ ISWs. Then, by definition, these ISWs can be guaranteed to win a prize on their own, and no deviation by the players who are not ISWs can secure one of these prizes. Moreover, since there are only $k$ prizes, each ISW wins a prize and does not stand to gain by deviating. Hence, all outcomes where the $k$ ISWs receive a payoff of $1$ are stable.
\end{proof}

Although this characterisation is specific to the SCD-core, it follows from Remark \ref{rem:hierarchy} that the MCD/ wMCD/ MCDT-core of the game is also empty. Moreover, if the MCD/ wMCD/ MCDT-core is empty, then it cannot be the case that $G$ contains $k$ ISWs (because the outcome where the $k$ ISWs each win a prize is stable to all deviation concepts).

\begin{corollary}\label{corollary:uniform_prize_cores}
    The MCD/ wMCD/ MCDT-core of a $k$-uniform-prize game is non-empty if and only if it has $k$ iterative singleton winners.
\end{corollary}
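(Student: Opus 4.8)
The plan is to derive the corollary from Theorem~\ref{thm:uniform-prize_stability} together with the hierarchy of Remark~\ref{rem:hierarchy}. The forward implication is immediate: since $\textit{MCDT-core}(G) \subseteq \textit{wMCD-core}(G) \subseteq \textit{MCD-core}(G) \subseteq \textit{SCD-core}(G)$, non-emptiness of any of the three cores lifts to non-emptiness of the SCD-core, whence Theorem~\ref{thm:uniform-prize_stability} supplies $k$ iterative singleton winners (ISWs). The whole content therefore lies in the backward implication, where I would fix the canonical payoff $\vec{x}$ that assigns $1$ to each of the $k$ ISWs and $0$ to everyone else, and show that $\vec{x}$ is stable under the \emph{most permissive} deviation concept, MCDT. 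Because $SCD(G) \subseteq MCD(G) \subseteq wMCD(G) \subseteq MCDT(G)$, MCDT-stability of $\vec{x}$ immediately yields stability under all four concepts, so all four cores contain $\vec{x}$ and are non-empty.

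The key step is a lemma strengthening the assertion in the proof of Theorem~\ref{thm:uniform-prize_stability} that the ISWs ``win on their own'': I would show that \emph{every ISW, playing as a singleton, wins a prize in the full game $G$ regardless of how the remaining players are partitioned}. First I would check that singleton winners may be removed one at a time rather than in batches, by verifying that deleting a singleton winner $a$ together with one prize only lowers the threshold, i.e.\ $\frac{1}{k}\,w(N\setminus\{a\}) \le \frac{1}{k+1}w(N) = w^*$ whenever $w_a \ge w^*$, so the remaining singleton winners stay above the new threshold. Indexing the ISWs as $a_1,\dots,a_k$ with $a_j$ a singleton winner of the game $G_{j-1}$ obtained after removing $a_1,\dots,a_{j-1}$ (and $j-1$ prizes), I would then bound, for an arbitrary partition of $N\setminus\{a_j\}$, the number of coalitions that beat $\{a_j\}$: at most $j-1$ of them can contain one of $a_1,\dots,a_{j-1}$, and at most $k-j$ of the remaining coalitions (which lie inside $N\setminus\{a_1,\dots,a_j\}$) can beat $\{a_j\}$, since more than $k-j$ such disjoint heavy sets would, after absorbing any leftover players into one of them, contradict the fact that $\{a_j\}$ wins in $G_{j-1}$. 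Hence $\{a_j\}$ is beaten by at most $k-1$ coalitions and lands among the top~$k$.

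With the lemma in hand, refuting MCDT is short. Given any candidate deviation $(\pi_D, D)$, put $m = x(D)$, which equals the number of ISWs inside $D$. I would take the residual response in which every residual player --- in particular each of the $k-m$ residual ISWs --- forms a singleton, producing a completion $\pi'$. Applying the lemma to each residual ISW \emph{within this single $\pi'$}, every one of these $k-m$ singleton coalitions wins a prize; being distinct, they occupy $k-m$ of the $k$ winning slots, so the deviating coalitions jointly win at most $m = x(D)$. This $\pi'$ violates the strict MCDT requirement $\sum_{C\in\pi_D} v(C,\pi') > x(D)$, so no MCDT exists and $\vec{x}$ is MCDT-stable, completing the backward direction.

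I expect the lemma to be the main obstacle, specifically the passage from the singleton-winner guarantee in the \emph{reduced} game to a guarantee in the \emph{full} game, and the fact that this guarantee holds \emph{simultaneously} for all residual ISWs inside one completion rather than only one at a time. The counting argument bounding ``beating'' coalitions according to whether they use a previously removed winner is where the care is needed; the one-at-a-time reduction check and the reduction of the corollary to MCDT-stability are routine once this is set up.
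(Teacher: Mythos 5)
Your proof is correct and follows essentially the same route as the paper: the forward direction via the hierarchy of Remark~\ref{rem:hierarchy} and Theorem~\ref{thm:uniform-prize_stability}, and the backward direction by showing the canonical payoff (ISWs get $1$, everyone else $0$) is MCDT-stable. The only difference is that you carefully prove the step the paper asserts without argument --- that every iterative singleton winner, as a singleton, simultaneously wins a prize in the \emph{full} game under any residual partition --- which is a worthwhile elaboration but not a different approach.
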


Finally, we show that in fact, in $k$-uniform-prize games, the cores under all solution concepts coincide.

\begin{corollary}
    In a $k$-uniform-prize game $G$, SCD-core(G) = MCD-core(G) = wMCD-core(G) = MCDT-core(G).
\end{corollary}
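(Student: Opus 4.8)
The plan is to use a squeeze argument built on top of the hierarchy already established in Remark~\ref{rem:hierarchy}. Since that remark gives
\[
\textit{MCDT-core}(G) \subseteq \textit{wMCD-core}(G) \subseteq \textit{MCD-core}(G) \subseteq \textit{SCD-core}(G),
\]
it suffices to prove the single reverse containment $\textit{SCD-core}(G) \subseteq \textit{MCDT-core}(G)$; chaining this with the hierarchy forces all four cores to coincide.

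First I would split into two cases according to the number of iterative singleton winners, using the characterisations already in hand. If $G$ has fewer than $k$ ISWs, then by Theorem~\ref{thm:uniform-prize_stability} the SCD-core is empty, and by Corollary~\ref{corollary:uniform_prize_cores} the MCD/wMCD/MCDT-cores are empty as well; four empty sets are trivially equal, so there is nothing to prove in this case.

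The substantive case is when $G$ has exactly $k$ ISWs. Here Theorem~\ref{thm:uniform-prize_stability} tells me the SCD-core is non-empty and, by Remark~\ref{rem:irrelevance_of_coalition_structure}, consists exactly of those outcomes whose payoff vector is the unique $\vec{x}$ assigning $1$ to each ISW and $0$ to everyone else. To close the reverse containment I would show that this same payoff vector is MCDT-stable, i.e.\ lies in the smallest core. This is precisely the content of the backward direction of Theorem~\ref{thm:uniform-prize_stability}: each ISW already wins a prize as a singleton regardless of the residual players' coalition structure, so no deviating coalition containing an ISW can strictly improve; and since the $k$ ISWs claim all $k$ prizes, any deviating partition among the remaining non-ISW players wins no prize at all. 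Hence for any deviating set $D$ the total value $\sum_{C \in \pi_D} v(C,\pi')$ obtainable under the maximally pessimistic residual response never exceeds $x(D)$, so $\vec{x}$ admits no MCDT and lies in the MCDT-core. This yields $\textit{SCD-core}(G) \subseteq \textit{MCDT-core}(G)$.

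There is no real obstacle here beyond careful bookkeeping: the whole statement follows from two facts already proved, namely (a) all four cores share the same non-emptiness condition, the existence of $k$ ISWs, and (b) when non-empty, the unique SCD-stable payoff already survives the strongest deviation concept, MCDT. The only point that needs care is to invoke Remark~\ref{rem:irrelevance_of_coalition_structure}, so that uniqueness of the \emph{payoff vector} (rather than of a full outcome including its coalition structure) is enough to conclude that the cores coincide as sets of outcomes.
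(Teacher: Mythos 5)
Your proof is correct and follows essentially the same route as the paper's: both rely on the ISW characterisation (Theorem~\ref{thm:uniform-prize_stability} and Corollary~\ref{corollary:uniform_prize_cores}), note that any outcome deviating from the canonical payoff vector is already SCD-unstable, and that the canonical payoff vector survives even MCDT because the $k$ ISWs absorb all $k$ prizes. The only cosmetic difference is that you phrase this as the single reverse containment $\textit{SCD-core}(G)\subseteq\textit{MCDT-core}(G)$ chained with Remark~\ref{rem:hierarchy}, whereas the paper states directly that every outcome is stable to all four concepts or to none.
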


\begin{proof}
    If $G$ does not consist of $k$ ISWs, then by Corollary \ref{corollary:uniform_prize_cores} the core, with respect to any deviation concept, is empty. If $G$ consists of $k$ ISWs, then either each of the $k$ ISWs gets payoff $1$ (i.e., does not share its prize with others), or some ISWs get payoff $<1$. In the first case, we have shown that no deviation exists. In the second case, an ISW with a payoff of $<1$ can deviate by forming an SCD on its own. Thus an outcome is either not stable to any deviation concept or is stable to all deviation concepts.
\end{proof}

Theorem 4.9 also provides us with a polynomial time algorithm for deciding whether the core of a $k$-uniform-prize WVG is empty, provided that it is easy to break ties.

\begin{corollary}
    The problem of deciding whether a $k$-uniform-prize WVG has an SCD-stable outcome is in $P$.
\end{corollary}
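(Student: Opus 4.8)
The plan is to convert Theorem~\ref{thm:uniform-prize_stability} into an explicit procedure and bound its cost. That theorem says the SCD-core is non-empty exactly when the game has $k$ iterative singleton winners, so it suffices to compute the number of ISWs and test whether it equals $k$. I would run the iterative reduction literally: repeatedly compute the set $s$ of singleton winners of the current game, delete them together with $|s|$ prizes, and iterate until either no prizes remain (accept, since the accumulated count of removed winners is then $k$) or the game becomes irreducible with prizes still unallocated (reject, since fewer than $k$ ISWs were found).

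First I would verify that the number of rounds is polynomial. Every round that reduces deletes at least one singleton winner, hence at least one prize, so there are at most $k\le n$ rounds. Inside a round I compute the threshold $w^*=\frac{1}{k'+1}w(N')$ of the current game on players $N'$ with $k'$ prizes and compare each weight against it; all quantities are rationals of polynomial description, so this is cheap. By Lemma~\ref{lem:singleton_winners}, the case $n'=k'$ is immediate (all players are singleton winners), and any player with $w_i>w^*$ is a singleton winner that is flagged by one comparison. The only players whose status is not settled by a single comparison are the \emph{boundary} players with $w_i=w^*$.

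For those I would use a structural shortcut. A coalition can deprive the singleton $\{i\}$ of a prize only by being at least as heavy as $\{i\}$, i.e.\ of weight at least $w^*$; since the residual players $N'\setminus\{i\}$ have total weight exactly $k'w^*$, blocking $i$ forces them to split into exactly $k'$ coalitions each of weight exactly $w^*$, every one of which beats $\{i\}$ on the tie-break. Consequently, if the current game contains any player of weight strictly above $w^*$, such an equal split is impossible (that player cannot sit inside a part of weight $w^*$ without overshooting, which would leave some other part below $w^*$ and hence beaten by $\{i\}$), so in any round that still has a strict singleton winner every boundary player is automatically a singleton winner and requires no further test. The genuinely hard work is therefore confined to rounds in which every weight is at most $w^*$.

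In such a round, deciding whether a boundary player $i$ is a singleton winner reduces to deciding whether $N'\setminus\{i\}$ can be split into $k'$ equal-weight coalitions that each win their tie-break against $\{i\}$. This equal-partition feasibility test is the step I expect to be the main obstacle, since in the abstract it has the flavour of a number-partitioning problem and is not dispatched by the weight and tie-break comparisons used everywhere else. It is precisely here that the proviso ``it is easy to break ties'' does its work: when ties are easy to resolve -- in particular in the generic case where no two coalitions share a weight, so that a feasible split of $N'\setminus\{i\}$ into parts each tied with $\{i\}$ simply cannot occur and the boundary player trivially wins -- the test is settled directly, a conclusion consistent with the structure of irreducible games recorded in Corollary~\ref{cor:irreducible_game}. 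Under that proviso each of the at most $k$ rounds runs in polynomial time, so the whole procedure, and hence deciding non-emptiness of the SCD-core, lies in $P$.
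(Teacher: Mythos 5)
Your overall strategy is the same as the paper's: invoke Theorem~\ref{thm:uniform-prize_stability}, run the iterative reduction literally, observe that there are at most $k$ rounds, and argue each round is cheap. Where you go beyond the paper is in isolating the one step that is not obviously polynomial, namely deciding singleton-winner status for a \emph{boundary} player with $w_i = w^*$, and your observation that this case is automatic whenever some player has weight strictly above $w^*$ (an exact $k'$-way split of $N'\setminus\{i\}$ into parts of weight $w^*$ is then impossible, so some part falls below $w^*$ and loses to $\{i\}$) is correct and a genuine refinement.

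The gap is in your final paragraph. By Lemma~\ref{lem:singleton_winners}, settling a boundary player in the remaining case requires deciding whether $N'\setminus\{i\}$, of total weight $k'w^*$, can be partitioned into $k'$ coalitions of weight exactly $w^*$ each beating $\{i\}$ on tie-break. You correctly flag this as a number-partitioning question, but you do not then give a polynomial-time procedure for it: you instead restrict to a ``generic case'' in which no such equal split can exist, which is an assumption about the instance, not something the corollary grants you. The paper's standing assumption is only a polynomial-time oracle for comparing two given coalitions under $\succ$; that oracle does not decide the existential question of whether an equal-weight split exists (for $k'=2$ this is exactly \textsc{Partition}). So as written your argument establishes membership in $P$ only for instances with no boundary players (or where the strict-winner shortcut applies), not for all $k$-uniform-prize WVGs. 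For what it is worth, the paper's own one-line proof does not address this case either and hides it behind the phrase ``provided that it is easy to break ties''; you have surfaced a real subtlety, but to close the proof you would need either an explicit polynomial algorithm for the boundary test or an explicit assumption that excludes or resolves it.
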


\begin{proof}
    This follows directly from Theorem \ref{thm:uniform-prize_stability}: we can sort the players from largest to smallest and identify whether there are $k$ ISWs in polynomial time by iterative reduction, which will occur at most $k$ times.
\end{proof}

\section{Efficiency} \label{sec:efficiency}

A natural consideration about stable outcomes is whether they are socially efficient. We consider two well-known concepts of efficiency: Pareto-optimality and utilitarian social welfare. An outcome is Pareto-optimal if there is no alternative outcome where some players can gain more without some other player losing; it is utilitarian if the total payoff for all players cannot be further increased. Observe that a utilitarian outcome is always Pareto-optimal (but not the reverse); and that, in a $k$-WVG, an outcome is utilitarian if and only if all prizes are claimed. Formally, an outcome $\omega = (\pi, \vec{x})$ is \textit{Pareto-optimal} if there does not exist some $\omega' = (\pi', \vec{x}')$ such that there exists some $i \in N$ where $x'_i > x_i$, and for all $i \in N$, we have $x'_i \geq x_i$. An outcome $\omega = (\pi, \vec{x})$ is \textit{utilitarian} if there is no outcome $\omega' = (\pi', \vec{x}')$ such that $\sum_{i \in N} x'_i > \sum_{i \in N} x_i$. A utilitarian outcome is always Pareto-optimal.

We begin with a negative result about the efficiency of SCD/ MCD-stable outcomes, followed by two positive results about wMCD/ MCDT-stable outcome.

\begin{proposition}
    There are SCD/MCD-stable outcomes that are neither Pareto-optimal nor utilitarian.
\end{proposition}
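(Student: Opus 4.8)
The plan is to exhibit a single concrete game together with an outcome that is MCD-stable and hence, by Remark~\ref{rem:hierarchy} (which gives $\textit{MCD-core}(G)\subseteq\textit{SCD-core}(G)$), also SCD-stable, yet admits a Pareto improvement. Since a utilitarian outcome is always Pareto-optimal, establishing that the outcome fails Pareto-optimality immediately yields that it is not utilitarian, so it suffices to refute Pareto-optimality alone. The conceptual point I would exploit is the asymmetry between stability and efficiency: deviators are \emph{maximally pessimistic} and face an adversarial residual response, whereas Pareto-optimality compares against a freely chosen cooperative outcome. A prize can therefore be cooperatively claimable (so the outcome is Pareto-dominated) while being impossible to \emph{guarantee} against a hostile residual (so no profitable deviation exists).

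Concretely, I would take the three-player, three-prize game $G = ((1,1,1);(3,2,1))$, with the tie-break order $\succ$ chosen so that among singletons $\{3\}\succ\{1\}\succ\{2\}$; any extension to $2^N$ respecting weights works, since every comparison that arises in the analysis is decided by coalition size except among singletons. Consider the outcome $\omega = ([1\ 2\ |\ 3],\ (2,1,2))$, in which $\{1,2\}$ wins $p_1=3$ and $\{3\}$ wins $p_2=2$, leaving $p_3=1$ unclaimed.

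To prove MCD-stability I would check, for each candidate deviating coalition (for SCD) and each deviating partition with at least two parts (for MCD), the value the deviators can \emph{guarantee} under the worst-case residual partition. The crux is that each singleton can guarantee only its current payoff: when a player deviates alone the residual pair can always merge or split, exploiting the tie-break, so as to cap that player's prize at exactly its payoff. For instance, the residual keeps $\{1,2\}$ merged to cap player $3$ at $p_2=2$, and splits $\{1,3\}$ into singletons to cap player $2$ at $p_3=1$; player $1$ is capped at $p_2=2$ under either response. Every coalition of size $\geq 2$ can guarantee at most $p_1=3$, which it already matches or exceeds, since $x(\{1,2\})=3$, $x(\{1,3\})=4$, $x(\{2,3\})=3$, and $x(N)=5$, so none strictly gains. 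For genuine multi-coalition deviations the only candidates split some set into singletons; because the singleton ranking is fixed by $\succ$, the induced outcome assigns $(p_1,p_2,p_3)$ to $(\{3\},\{1\},\{2\})$, and one verifies that in every such deviating partition at least one deviating coalition fails to improve strictly, so no MCD exists.

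Finally, to refute Pareto-optimality I would present the all-singletons outcome $\omega' = ([1\ |\ 2\ |\ 3],\ (2,1,3))$, which claims the previously unused prize: here $\{3\}$ wins $p_1=3$, $\{1\}$ wins $p_2=2$, and $\{2\}$ wins $p_3=1$. Since $(2,1,3)\geq(2,1,2)$ componentwise with a strict gain for player $3$, $\omega'$ Pareto-dominates $\omega$; hence $\omega$ is neither Pareto-optimal nor, as $p_3$ is unclaimed, utilitarian. I expect the main obstacle to be calibrating the tie-break order so that the grouped outcome is simultaneously deviation-proof and Pareto-dominated: the tie-break must make each singleton's pessimistic guarantee coincide exactly with its payoff, ruling out all single- and multi-coalition deviations, while still permitting the cooperative reallocation that claims $p_3$.
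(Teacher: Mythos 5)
Your proof is correct and takes essentially the same approach as the paper: exhibit a concrete outcome in which a prize goes unclaimed, so the outcome is Pareto-dominated by the all-singletons outcome, yet no coalition or deviating partition can \emph{guarantee} a strict improvement against a pessimistic residual. The paper's example, $\vec{w}=(0.45,0.30,0.25)$ with $\vec{p}=(2,1,1)$ and the stable outcome $([2\ 3\ |\ 1],(1,1,1))$, achieves the same thing without any reliance on a tie-breaking order, which makes the verification slightly more robust than your uniform-weight construction, but your case analysis (including the tie-break calibration) checks out.
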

    
\begin{proof}
    Consider the game $G = (\vec{w}; \vec{p})$ where $\vec{w} = (0.45, 0.30, 0.25)$ and $\vec{p} = (2, 1, 1))$. The only outcome that is utilitarian, and Pareto-optimal, is $([1\ |\ 2\ |\ 3], (2, 1, 1))$. However, the outcome \\$([2\ 3\ |\ 1], (1, 1, 1))$, where the last prize is unclaimed, is SCD/ MCD-stable as no player can gain more by deviating.
\end{proof}

Next, we draw a connection between the notion of wMCD stability and Pareto optimality.

\begin{proposition} \label{prop:wMCD=>Pareto}
    All wMCD-stable outcomes are Pareto-optimal. 
\end{proposition}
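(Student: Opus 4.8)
The plan is to argue by contraposition: I will show that any outcome that is \emph{not} Pareto-optimal is susceptible to a wMCD, and hence is not wMCD-stable. So suppose $\omega = (\pi, \vec{x})$ is not Pareto-optimal. By definition there is an outcome $\omega' = (\pi', \vec{x}')$ with $x'_i \geq x_i$ for every $i \in N$ and $x'_j > x_j$ for some $j \in N$.

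The key construction is to let the \emph{entire} dominating partition serve as the deviating partition: take $D = N$ and $\pi_D = \pi'$ (this is permitted, since the definitions allow $D \subseteq N$, and indeed the proof of Proposition~\ref{wMCD_neq_MCDT} considers all-player deviations). The point of choosing $D = N$ is that the ``maximally pessimistic'' quantification in the definition of wMCD collapses, because the only partition $\pi'' \in \Pi(N)$ with $\pi' \subseteq \pi''$ is $\pi'' = \pi'$ itself; there are no residual players against whose response I must guard. It therefore suffices to verify conditions (1) and (2) of the wMCD definition for this single partition $\pi'$.

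For condition (1), fix any $C \in \pi_D = \pi'$. Since $\omega'$ is an outcome, feasibility gives $\sum_{i \in C} x'_i = v(C, \pi')$, and since $x'_i \geq x_i$ for all $i$ we obtain $\sum_{i \in C} x_i \leq \sum_{i \in C} x'_i = v(C, \pi')$, as required. For condition (2), let $C^\star \in \pi'$ be the coalition containing the player $j$ with $x'_j > x_j$. Summing the coordinatewise inequalities over $C^\star$, with the strict one contributed by $j$, yields $\sum_{i \in C^\star} x_i < \sum_{i \in C^\star} x'_i = v(C^\star, \pi')$, so the inequality in (1) is strict for at least one coalition. Hence $\pi_D = \pi'$ is a wMCD from $\omega$, which establishes the contrapositive.

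The argument is short because the real content is the observation that taking $D = N$ trivialises the pessimistic quantifier. The one point that warrants care --- and the main (minor) obstacle --- is the interaction between the two definitions: Pareto domination is stated purely in terms of payoff vectors, whereas the wMCD conditions are phrased through the coalition values $v(C, \pi')$. The bridge is the outcome-feasibility constraint $\sum_{i \in C} x'_i = v(C, \pi')$, which lets me translate the componentwise payoff gains of $\vec{x}'$ into the per-coalition value comparisons that conditions (1) and (2) demand. It is worth noting in passing that this situation genuinely forces $\pi' \neq \pi$: were the two partitions equal, feasibility would require $\vec{x}'$ and $\vec{x}$ to have identical coalition sums, precluding Pareto domination --- consistent with the fact that a utilitarian (all-prizes-claimed) outcome admits no all-player wMCD.
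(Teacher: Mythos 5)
Your proof is correct and takes essentially the same route as the paper's (much terser) argument: the paper also lets all players deviate to the partition of the Pareto-dominating outcome, and your write-up simply fills in the details of why the pessimistic quantifier collapses and why feasibility of $\omega'$ yields conditions (1) and (2) of the wMCD definition.
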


\begin{proof}
    Suppose an outcome is wMCD-stable but not Pareto-optimal. Then all players can deviate via a wMCD to the Pareto-improved outcome, which contradicts the assumption that the outcome is wMCD-stable.
\end{proof}

Finally, we find that a result analogous to the celebrated `Coase theorem' \cite{coase1960problem} holds in the model we consider.

\begin{proposition}
    All MCDT-stable outcomes are Pareto-optimal, even if we extend the set of feasible outcomes to allow for transfers between coalitions (i.e., replacing the requirement that for all $ C \in \pi: \sum_{i \in C} x_i = v(C, \pi)$ by $\sum_{C \in \pi} v(C, \pi) = \sum_{\{j:j \leq |\pi|\}} p_j$).
\end{proposition}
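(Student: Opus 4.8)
The plan is to argue by contraposition: I would suppose that an extended-feasible outcome $\omega = (\pi,\vec x)$ is \emph{not} Pareto-optimal and then exhibit an explicit MCDT from it, so that it cannot be MCDT-stable. Since every MCDT is defined purely in terms of the payoff vector (Remark~\ref{rem:irrelevance_of_coalition_structure}), the coalition structure of $\omega$ plays no role and I only need to reason about $\vec x$.

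First I would unpack non-Pareto-optimality. By definition there is an extended-feasible outcome $\omega' = (\pi',\vec x')$ with $x'_i \ge x_i$ for every $i \in N$ and $x'_j > x_j$ for at least one $j$. Summing these inequalities over all players yields the strict inequality $\sum_{i \in N} x_i < \sum_{i \in N} x'_i$. This is the only place the strictness of the Pareto improvement is used, and it is the engine of the argument. Next I would invoke the extended feasibility of $\omega'$: the very purpose of permitting inter-coalition transfers is that the budget constraint becomes global rather than per-coalition, so that $\sum_{i \in N} x'_i = \sum_{C \in \pi'} v(C,\pi')$. Chaining this with the previous step gives
\[
    \sum_{i \in N} x_i \;<\; \sum_{C \in \pi'} v(C,\pi').
\]

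Then I would construct the deviation by taking the grand deviating set $D = N$ and the deviating partition $\pi_D = \pi'$. The key observation, which makes the ``maximally pessimistic residual players'' clause vacuous, is that when $D = N$ the deviating partition already covers all of $N$, so the only $\pi'' \in \Pi(N)$ with $\pi_D \subseteq \pi''$ is $\pi'' = \pi'$ itself: there are no residual players left to respond. Consequently the universally quantified MCDT condition collapses to the single requirement $\sum_{i \in N} x_i < \sum_{C \in \pi'} v(C,\pi')$, which is exactly the inequality just established. Hence $\pi_D = \pi'$ is an MCDT from $\omega$, contradicting MCDT-stability, and so every MCDT-stable outcome must be Pareto-optimal.

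The step I would flag as the crux is the use of extended feasibility to rewrite $\sum_{i \in N} x'_i$ as $\sum_{C \in \pi'} v(C,\pi')$: this is what lets a Pareto improvement in the enlarged outcome space translate directly into the total-value comparison that an MCDT tests, and it is precisely why the result is a genuine Coase-type statement rather than a triviality. Everything else is bookkeeping, and crucially no residual-player analysis is needed, since the grand-coalition deviation eliminates residual players altogether.
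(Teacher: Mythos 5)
Your argument is correct and is essentially the paper's own proof (which simply says that all players can jointly deviate via an MCDT to the Pareto-improved outcome, by analogy with Proposition~\ref{prop:wMCD=>Pareto}); you have just made explicit the summation step, the use of global budget balance under inter-coalition transfers, and the observation that a grand-coalition deviation leaves no residual players. No gaps.
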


\begin{proof}
    The proof is analogous to that of Proposition~\ref{prop:wMCD=>Pareto}. Note that the space of feasible outcomes is significantly enriched by allowing inter-coalition transfers, which also extends the Pareto-frontier.
\end{proof}

\section{Computational complexity} \label{sec:complexity}

In this section, we study the computational complexity of three classes of decision problems related to the stability of a given outcome in a game, and the problem of deciding whether the core of a game is non-empty, all with respect to different deviation concepts. Henceforth, given a $k$-WVG $G$, a deviating partition $\pi_D \in \Pi(D)$ where $D \subseteq N$, and a set of target prizes $\vec{p}_t \subseteq \vec{p}$ such that $\vert \vec{p}_t \vert = \vert \pi_D \vert$, we will say that $\vec{p}_t$ is \textit{attainable by} $\pi(D)$ if for all partitions $\pi' \in \Pi(N\setminus D)$, the $i^{\text{th}}$ largest deviating coalition in $\pi_D$ wins a prize at least as valuable as the $i^{\text{th}}$ largest prize in $\vec{p}_t$. The problems are defined as follows, for $\theta \in \set{SCD, MCD, wMCD, MCDT}$:

\begin{quote}\underline{\textsc{Attainable-Prizes}}:\\
\emph{Given}: $k$-WVG $G = (N,\vec{w},\vec{p}, \succ)$, deviating partition $\pi_D \in \Pi(D)$ where $D \subseteq N$, and target prizes $\vec{p}_t \subseteq \vec{p}$ where $|\vec{p}_t| = |\pi(D)|$.\\
\emph{Question}: Is $\vec{p}_t$ attainable by $\pi_D$?
\end{quote}

\begin{quote}\underline{\textsc{$\exists$-$\theta$}}:\\
\emph{Given}: $k$-WVG $G = (N,\vec{w},\vec{p}, \succ)$, outcome $\omega = (\pi,\vec{x})$.\\
\emph{Question}: Is $\omega$ susceptible to a $\theta$?
\end{quote}

\begin{quote}
\underline{\textsc{Non-Empty-$\theta$-Core}}:\\
\emph{Given}: $k$-WVG $G = (N,\vec{w},\vec{p}, \succ)$.\\
\emph{Question}: Is the $\theta$-core of $G$ non-empty?
\end{quote}

\vspace*{1ex}\noindent\textbf{Complexity upper bounds:} We begin by studying the \textsc{Attainable-Prizes} problem, which we will use as an oracle in the \textsc{$\exists$-$\theta$} problems.

\begin{lemma} \label{lem:attainable_prizes}
    \textsc{Attainable-Prizes} is in co-NP.
\end{lemma}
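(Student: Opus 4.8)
The plan is to show that the complement problem—deciding whether $\vec{p}_t$ is *not* attainable by $\pi_D$—is in NP by exhibiting a polynomially-sized certificate that can be verified in polynomial time. Recall that $\vec{p}_t$ is attainable by $\pi_D$ precisely when, for *every* partition $\pi' \in \Pi(N \setminus D)$ of the residual players, the $i$-th largest deviating coalition in $\pi_D$ wins a prize at least as good as the $i$-th largest prize in $\vec{p}_t$. The negation is therefore an existential statement: $\vec{p}_t$ is not attainable if and only if there exists *some* residual partition $\pi' \in \Pi(N \setminus D)$ under which, for some index $i$, the $i$-th largest deviating coalition fails to secure a prize at least as valuable as the $i$-th target prize. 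This existential structure is exactly what makes the complement naturally an NP problem.

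First I would take the certificate to be a single residual partition $\pi' \in \Pi(N \setminus D)$. Its size is polynomial in the input (it is just a partition of a subset of the $n$ players). Given such a candidate $\pi'$, the verifier forms the full partition $\pi_D \cup \pi'$ of $N$, then computes the overall ranking of all coalitions with respect to $\succ$. Here I would invoke the polynomial-time tie-breaking oracle assumed in the preliminaries: comparing any two coalitions under $\succ$ takes polynomial time, so sorting all coalitions in $\pi_D \cup \pi'$ is polynomial. Once the coalitions are ranked, the verifier reads off, for each deviating coalition in $\pi_D$, which prize it wins (the $j$-th ranked coalition overall wins $p_j$, or nothing if $j > k$). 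It then checks whether the prizes awarded to the deviating coalitions, sorted in descending order, fail to dominate $\vec{p}_t$ componentwise—i.e. whether for some $i$ the $i$-th largest deviating coalition's prize is strictly smaller than the $i$-th prize in $\vec{p}_t$. If so, the certificate witnesses non-attainability; the verifier accepts.

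The main obstacle, and the step deserving the most care, is arguing that quantifying over *all* residual partitions reduces to guessing a *single* one—that is, that a witnessing $\pi'$ genuinely certifies non-attainability rather than merely non-attainability under one adversarial response. This is immediate from the definition: the universal quantifier in the definition of attainability means a single violating $\pi'$ suffices to refute it, so the guess-and-check paradigm applies cleanly. A secondary subtlety is confirming that the verification runs in polynomial time despite the combinatorial nature of coalition rankings; this is handled by noting that $|\pi_D \cup \pi'| \leq n$, so there are at most $n$ coalitions to sort, and each pairwise $\succ$-comparison is a single oracle call. Having established that the complement is in NP via this certificate, I conclude that \textsc{Attainable-Prizes} is in co-NP.
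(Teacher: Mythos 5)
Your proof is correct and follows essentially the same approach as the paper: guess a residual partition of $N \setminus D$ as a certificate for the complement, then verify in polynomial time (using the tie-breaking oracle to rank the at most $n$ coalitions) that some deviating coalition fails to secure its target prize. Your write-up is in fact somewhat more careful than the paper's about why a single witnessing partition suffices and why verification is polynomial.
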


\begin{proof}
    We can guess a partition $\pi_{N \setminus D}$ of the residual players $N \setminus D$ and verify that it prevents the deviating coalition from winning all the target prizes in polynomial time: rank all the coalitions from both the deviating partition and the residual partition, and check that the residual coalitions win one of the target prizes. This is in NP and corresponds to the complement of \textsc{Attainable-Prizes}. Therefore, \textsc{Attainable-Prizes} is in co-NP.
\end{proof}

With Lemma \ref{lem:attainable_prizes}, we now derive the upper bounds on the complexity of the $\exists$-$\theta$ set of decision problems.

\begin{theorem}\label{thm:membership_complexity}
    For $\theta \in \set{SCD,MCD,wMCD,MCDT}$, \textsc{$\exists$-$\theta$} is in~$\Sigma^P_2$.
\end{theorem}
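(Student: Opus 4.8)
The plan is to exhibit, for each $\theta$, a $\Sigma^P_2$ procedure: nondeterministically guess a polynomial-size witness for a deviation, then verify it using a co-NP oracle. Recall that a $\Sigma^P_2$ computation is an existential guess followed by a co-NP (equivalently, universal) verification stage, so it suffices to show that the witness is polynomially bounded and that the verification reduces to polynomially many calls to \textsc{Attainable-Prizes}, which is in co-NP by Lemma~\ref{lem:attainable_prizes}.

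First I would describe the existential guess uniformly across all four concepts. A deviation is specified by a deviating partition $\pi_D \in \Pi(D)$ of some $D \subseteq N$; this is a polynomial-size object (an assignment of a subset of players to coalitions), so it can be guessed. The only remaining information needed to certify a successful deviation is \emph{which prizes each deviating coalition is guaranteed to win} against a maximally pessimistic residual response. For this I would additionally guess, for each coalition $C \in \pi_D$, a target prize value $p_{t(C)} \in \vec{p}$; equivalently, guess a target prize multiset $\vec{p}_t$ together with an assignment of its entries to the coalitions of $\pi_D$. This is again polynomial size.

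Next comes the verification stage, which is where the four concepts differ only in the final arithmetic comparison, while sharing the same core guarantee check. Having guessed $\pi_D$ and an assignment of targets, I would verify that $\vec{p}_t$ is attainable by $\pi_D$ by a single call to the \textsc{Attainable-Prizes} oracle (in co-NP); attainability ensures that, under \emph{every} residual partition $\pi' \supseteq \pi_D$, the $i$-th largest deviating coalition wins a prize at least as good as the $i$-th target, so the guessed targets are lower bounds that the deviators can guarantee pessimistically. Given this guarantee, the incentive condition of each $\theta$ becomes a check on the \emph{guaranteed} payoffs: for SCD, that $\pi_D$ is a single coalition $C$ with $x(C) < p_{t(C)}$; for MCD, that $x(C) < p_{t(C)}$ for every $C \in \pi_D$; for wMCD, that $x(C) \le p_{t(C)}$ for all $C$ with strict inequality for at least one; and for MCDT, that $\sum_{i \in D} x_i < \sum_{C \in \pi_D} p_{t(C)}$. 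Each of these is a trivial arithmetic test on the guessed data. The machine accepts iff the oracle confirms attainability and the relevant inequality holds, so the whole procedure is an existential guess followed by one co-NP query, placing \textsc{$\exists$-$\theta$} in $\Sigma^P_2$ (since $\mathrm{NP}^{\mathrm{coNP}} = \mathrm{NP}^{\mathrm{NP}} = \Sigma^P_2$).

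The main subtlety to get right — and the step I expect to require the most care — is the soundness and completeness of reducing the maximally pessimistic ``for all residual responses'' quantifier to the attainability of a single guessed target vector. I would argue completeness by noting that if a genuine $\theta$-deviation $\pi_D$ exists, then for each coalition the worst-case prize it wins over all residual responses is well-defined, and guessing exactly those worst-case prizes as $\vec{p}_t$ makes $\vec{p}_t$ attainable while preserving the incentive inequalities; soundness follows because attainability already quantifies over all residual partitions, so the guaranteed payoffs are genuine lower bounds and cannot overstate what the deviators can secure. One technical point worth flagging is that attainability as defined compares the $i$-th largest deviating coalition against the $i$-th target \emph{simultaneously} for all residual responses, which is precisely the maximally pessimistic requirement embedded in all four deviation definitions; once this correspondence is pinned down, the containment is immediate from Lemma~\ref{lem:attainable_prizes} and the closure $\mathrm{NP}^{\mathrm{coNP}} = \Sigma^P_2$.
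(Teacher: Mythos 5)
Your proposal is correct and follows essentially the same route as the paper's proof: guess the deviating partition $\pi_D$ together with a target prize vector $\vec{p}_t$, check the $\theta$-specific incentive inequalities in polynomial time, and verify attainability with a single call to the co-NP \textsc{Attainable-Prizes} oracle, giving membership in $\mathrm{NP}^{\mathrm{coNP}} = \Sigma^P_2$. Your explicit soundness/completeness argument for replacing the universal quantification over residual partitions by an attainable guessed target vector is a welcome elaboration of a step the paper leaves implicit.
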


\begin{proof}
    We show that checking for the existence of a $\theta$ is in NP, given access to an oracle for \textsc{Attainable-Prizes}, which is in co-NP by Lemma \ref{lem:attainable_prizes}.
    
    We guess a deviating partition $\pi_D$ and target prizes $\vec{p}_t$ such that the relevant incentive requirements for $\theta$ are satisfied, i.e., that if $\pi_D$ attains $\vec{p}_t$,then it satisfies the definition of $\theta$. Note that for SCD, the target prize is the smallest prize which exceeds $x(D)$; for MCD, the target prizes are the smallest prizes that exceed the payoffs to each deviating coalition; for wMCD, they are the smallest prizes that equal or exceed the payoffs of each deviating coalition, with one strict inequality required; for MCDT, we only require that the sum of target prizes exceeds the total payoffs of all deviating players. We can verify that the target prizes satisfy these requirements in polynomial time.
    
    Then, we call the \textsc{Attainable-Prizes} oracle with the game $G$, deviating partition $\pi_D$, and target prizes $\vec{p}_t$ as the input. The answer is ``yes'' if $\pi_D$ is a $\theta$. Therefore, \textsc{$\exists$-$\theta$} is in $\text{NP}^{\text{co-NP}} = \Sigma^p_2$.
\end{proof}

Given Theorem \ref{thm:membership_complexity}, upper bounds on the computational complexity of deciding whether a given outcome is $\theta$-stable and \textsc{Non-Empty-$\theta$-Core} naturally follow. Observing that checking for $\theta$-stability is the complement of \textsc{$\exists$-$\theta$}, we have the following:

\begin{corollary}
    For $\theta \in \set{SCD,MCD,wMCD,MCDT}$, a game $G$, and an outcome $\omega$, deciding whether $\omega$ is $\theta$-stable is in $\Pi^P_2$.
\end{corollary}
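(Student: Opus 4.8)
The plan is to reduce the statement directly to Theorem~\ref{thm:membership_complexity} by a complementation argument, so the proof is essentially a one-line observation dressed up in the language of the polynomial hierarchy. The starting point is the definition of $\theta$-stability: by Definition~[Stability and Core], an outcome $\omega$ is $\theta$-stable precisely when it is \emph{not} susceptible to a $\theta$. Hence the decision problem ``Is $\omega$ $\theta$-stable?'' takes exactly the same input as \textsc{$\exists$-$\theta$} (a game $G$ together with an outcome $\omega$) and returns the opposite answer on every instance. In other words, the stability language is the exact set-complement of the \textsc{$\exists$-$\theta$} language.

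First I would state this complementation relationship formally, taking care to check that the two problems share an identical input format so that the complement is literally the Boolean negation of the same total predicate (with no change in the encoding of $G$ or $\omega$). This is the only point that requires any attention at all: one must confirm that $\theta$-stability is a well-defined total decision problem on the same instances, which follows immediately from Remark~\ref{rem:irrelevance_of_coalition_structure} and the definitions of the four deviation concepts, since susceptibility to a $\theta$ is a definite yes/no property of $(\pi,\vec{x})$.

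Next I would invoke Theorem~\ref{thm:membership_complexity}, which places \textsc{$\exists$-$\theta$} in $\Sigma^P_2$ for each $\theta \in \set{SCD,MCD,wMCD,MCDT}$. By the definition of the second level of the polynomial hierarchy, $\Pi^P_2 = \mathrm{co}\text{-}\Sigma^P_2$, so the complement of any language in $\Sigma^P_2$ lies in $\Pi^P_2$. Applying this to the complement relationship established above yields that deciding $\theta$-stability is in $\Pi^P_2$, which is the desired conclusion.

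There is no genuine obstacle here; the result is an immediate corollary of the preceding theorem, and the ``hard part'' has already been done in proving Theorem~\ref{thm:membership_complexity} (namely the guess-and-check argument with the \textsc{Attainable-Prizes} oracle of Lemma~\ref{lem:attainable_prizes}). The only thing to be careful about is not conflating ``$\omega$ is $\theta$-stable'' with some stronger membership claim: the complementation gives membership in $\Pi^P_2$ directly and requires no separate upper-bound argument.
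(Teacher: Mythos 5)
Your proposal is correct and matches the paper's argument exactly: the paper likewise obtains this corollary by observing that deciding $\theta$-stability is the complement of \textsc{$\exists$-$\theta$}, which lies in $\Sigma^P_2$ by Theorem~\ref{thm:membership_complexity}, so the complement is in $\Pi^P_2$. No further comment is needed.
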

    
Using an oracle to decide whether an outcome is $\theta$-stable, we can readily derive an upper bound on the complexity of deciding whether the $\theta$-core of a game is empty:

\begin{corollary}
    For $\theta \in \set{SCD,MCD,wMCD,MCDT}$, \textsc{Non-Empty-$\theta$-Core} is in $\Sigma^P_3$.
\end{corollary}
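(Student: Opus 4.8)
The plan is to design an algorithm of the form $\text{NP}^{\Pi^P_2}$ and then invoke the identity $\text{NP}^{\Pi^P_2} = \Sigma^P_3$. By definition the $\theta$-core of $G$ is non-empty exactly when there is \emph{some} outcome that is $\theta$-stable, so the natural route is to existentially guess a candidate outcome $\omega = (\pi,\vec{x})$ and then verify its $\theta$-stability. For the verification I would use the preceding corollary, which places the decision ``is the given outcome $\theta$-stable?'' in $\Pi^P_2$, as an oracle. The partition $\pi$ has size polynomial in $n$, and given $(\pi,\vec{x})$ one checks in polynomial time that it is a genuine outcome (that $x_i \ge 0$ for all $i$ and that $x(C_j) = v(C_j,\pi)$ for every $C_j \in \pi$, the values $v(C_j,\pi)$ being computable by ranking the coalitions with the tie-break oracle). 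The machine accepts iff its guess is a valid, $\theta$-stable outcome; this is a nondeterministic polynomial-time computation with a single $\Pi^P_2$ oracle call, hence in $\text{NP}^{\Pi^P_2} = \Sigma^P_3$.

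The only delicate point --- and the main obstacle --- is that a nondeterministic machine can write down only polynomially many bits, so I must argue that whenever the $\theta$-core is non-empty it contains a stable outcome whose payoffs are rationals of polynomially bounded bit length. Fix a partition $\pi$ and consider the set $P_\pi$ of payoff vectors $\vec{x}$ for which $(\pi,\vec{x})$ is a valid, $\theta$-stable outcome. By Remark~\ref{rem:irrelevance_of_coalition_structure}, $\theta$-stability depends on $\vec{x}$ alone, so $P_\pi$ is cut out by linear conditions on $\vec{x}$: the validity constraints $x_i \ge 0$ and $x(C_j) = v(C_j,\pi)$, together with the negation of every potential deviation. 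Crucially, each potential deviation is indexed by a deviating partition $\pi_D$ (and, where relevant, a tuple of target prizes), and whether $\pi_D$ can \emph{guarantee} a given prize against every residual response is a property of $G$ alone and does not depend on $\vec{x}$. Each atomic incentive test therefore compares a payoff sum $x(C)$ against some prize value $v(C,\pi')$ determined by $G$; for SCD, MCD and MCDT this simplifies to comparing $x(C)$ (respectively the single sum $x(D)$) against the guaranteed value $\min_{\pi'}v(C,\pi')$, yielding a genuine polyhedron, while for wMCD the negated-deviation condition is a Boolean combination of such comparisons. In every case each comparison is a linear (in)equality in $\vec{x}$ with coefficients in $\{0,\pm 1\}$ and right-hand side a rational of polynomial bit length.

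To finish I would appeal to the standard fact that a non-empty set defined by a Boolean combination of linear (in)equalities contains a point whose bit complexity is polynomial in $n$ and the bit length of an \emph{individual} inequality --- independently of how many inequalities there are, which matters here since there are exponentially many candidate deviations. Concretely, $P_\pi$ is a union of cells of the hyperplane arrangement formed by the bounding hyperplanes $x(C) = v(C,\pi')$ together with the validity hyperplanes; if $P_\pi \neq \emptyset$, any of its points lies in some non-empty cell, which is a polyhedral region defined by a sub-collection of these hyperplanes carrying the same bit complexity, and hence (via Cramer's rule on an $n\times n$ subsystem, perturbing any strict facet by an inverse-exponential amount to stay in the relative interior without increasing the bit length beyond a polynomial) contains a rational point of polynomial size. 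That point is a valid, $\theta$-stable payoff vector of polynomial size, so restricting the existential guess to poly-bit payoff vectors is sound. Combining the guess with the $\Pi^P_2$ stability oracle then yields the claimed $\Sigma^P_3$ bound.
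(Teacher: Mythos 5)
Your proposal is correct and follows essentially the same route as the paper, which leaves this corollary unproved beyond the remark that one can guess an outcome and verify its $\theta$-stability with the $\Pi^P_2$ oracle, giving $\text{NP}^{\Pi^P_2}=\Sigma^P_3$. Your additional argument that a non-empty $\theta$-core must contain a payoff vector of polynomially bounded bit length (via the hyperplane-arrangement/LP facet-complexity argument, using Remark~\ref{rem:irrelevance_of_coalition_structure} and the fact that all stability conditions are Boolean combinations of inequalities $x(C)\bowtie p_i$ with $0/1$ coefficients) is sound and fills a detail the paper glosses over.
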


While we believe the above upper bounds are tight (e.g., $\exists$-SCD is $\Sigma^P_2$-complete), it is an open question whether there are matching complexity lower bounds. In the remainder of this section, we provide a lower bound for the problem $\exists$-SCD.

\vspace*{1ex}\noindent\textbf{Complexity lower bounds:} In this section, we derive lower bounds on the computational complexity of the decision problems for the existence of beneficial deviations.

\begin{theorem}\label{thm:exists_scd_complexity}
    \textsc{$\exists$-SCD} is $NP$-hard.
\end{theorem}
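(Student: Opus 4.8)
The plan is to reduce from \textsc{Subset-Sum}: given positive integers $a_1,\dots,a_m$ and a target $T$ (with each $a_i\le T$ w.l.o.g.), decide whether some $I\subseteq\{1,\dots,m\}$ has $\sum_{i\in I}a_i=T$. The conceptual key is an observation that makes the universal quantifier in the SCD definition disappear, so that I never need the co-NP attainability oracle: generalising the singleton-winner threshold of Lemma~\ref{lem:singleton_winners}, if a coalition $C$ satisfies $w(C)>w^*:=\frac{1}{k+1}w(N)$, then in every partition $\pi'$ with $C\in\pi'$ the residual players have total weight $w(N)-w(C)<\frac{k}{k+1}w(N)<k\,w(C)$, so they cannot fill all $k$ top slots with coalitions each heavier than $C$; hence $C$ is among the $k$ heaviest and wins at least the smallest prize. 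Thus ``$w(C)>w^*$'' is a simple, polynomially checkable certificate that $C$ can secure a prize.

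Next I would build a \emph{uniform}-prize game---crucially, all prizes equal, set to $p:=T+\frac12$---so that there is no ``aim for a bigger prize'' loophole: securing any prize yields exactly $p$, and $C$ is a beneficial SCD iff it can secure a prize and $x(C)<p$. I introduce one player per item, with both weight and payoff equal to $a_i$, so that any coalition $C$ of item-players has $w(C)=x(C)=\sum_{i\in C}a_i$. I then add a polynomial number of light dummy players used (i) to calibrate $w(N)$ so that $w^*=T-\frac12$, and (ii) to pad the winning coalitions of the constructed outcome $\omega=(\pi,\vec x)$ so that $x_i=a_i$ is realised as a valid payoff vector (each dummy tops its coalition up to $p$). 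With this calibration, an item-coalition $C$ satisfies $w(C)>w^*$ iff $\sum_{i\in C}a_i\ge T$ and satisfies $x(C)<p$ iff $\sum_{i\in C}a_i\le T$; so such a $C$ is a beneficial SCD exactly when $\sum_{i\in C}a_i=T$.

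For correctness, the forward direction is immediate: a subset summing to $T$ yields a coalition heavy enough to secure a prize yet whose payoff is strictly below $p$. The reverse direction is where the work lies: I must show that in a \textsc{No}-instance \emph{no} coalition is a beneficial SCD. Coalitions of item-players with $\sum>T$ secure a prize but have payoff $>p$, so are not beneficial; coalitions containing dummies only add payoff while contributing negligible weight, so they can never cross the threshold $w^*$ that an item-only coalition could not, and hence never help. The genuine obstacle is the remaining case of \emph{light} item-coalitions ($\sum_{i\in C}a_i<T$, i.e.\ $w(C)<w^*$): these have payoff $<p$ and would be beneficial \emph{if} they could secure a prize, so I must guarantee they cannot---i.e.\ that the residual players can always be partitioned into $k$ coalitions each heavier than $C$. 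This blockability is exactly what the dummy players must enforce, and making the dummy weights fine/numerous enough (while keeping their number and bit-size polynomial) to guarantee a blocking partition for every light $C$ is the main technical hurdle of the proof; the uniform-prize choice is what keeps this the \emph{only} hurdle, since it removes any incentive to target a larger prize.
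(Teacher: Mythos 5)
Your reduction is genuinely different from the paper's (you go via \textsc{Subset-Sum}, uniform prizes, and the weight threshold $w^*=\frac{1}{k+1}w(N)$, whereas the paper reduces from \textsc{Partition} with the tailored non-uniform prize vector $(2S+\varepsilon,S,s_1,\ldots,s_n)$ and a tie-breaking gadget), but as written it has a genuine gap: the entire reverse direction. You yourself flag ``making the dummy weights fine/numerous enough\ldots to guarantee a blocking partition for every light $C$'' as the main technical hurdle and then do not do it. That hurdle is not a routine detail. First, blockability is not a matter of total weight: even if the residual players have total weight exceeding $k\cdot w(C)$, a partition into $k$ parts each heavier than $C$ need not exist, because the residual item players are lumpy (weights up to $T-1\approx w^*$) and the aggregate slack available is only $(k+1)(w^*-w(C))$. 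Deciding whether a given light coalition can be blocked is exactly the co-NP-hard complement of \textsc{Attainable-Prizes}, so engineering the instance so that blocking is \emph{always} possible for \emph{every} light coalition is precisely where the difficulty of the theorem lives; asserting it does not discharge it.

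Second, there is a quantitative obstruction you have not noticed that threatens the design itself. Your outcome must satisfy $\sum_{i\in C_j}x_i=v(C_j,\pi)$, so the total payoff is at most $k\,p=k(T+\tfrac12)$, while the total weight is forced to be $(k+1)w^*=(k+1)(T-\tfrac12)$. With item players pinned at $x_i=w_i=a_i$, the dummies collectively receive payoff minus weight equal to $k(T+\tfrac12)-(k+1)(T-\tfrac12)=k+\tfrac12-T$, which is negative whenever $k<T$. So for any polynomial-size choice of $k$ the dummies are, in aggregate, ``cheap weight'': there exist dummy subsets $D'$ with $x(D')<w(D')$, and a light item coalition $C_I$ with $\sum_{i\in C_I}a_i\le T-1$ can recruit such a $D'$ to push its weight above $w^*$ (guaranteeing a prize by your own observation) while keeping $x(C_I\cup D')<p$. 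Your claim that dummy-containing coalitions ``can never cross the threshold'' would require total dummy weight below the integrality gap of $\tfrac12$, which is incompatible with calibrating $w(N)=(k+1)(T-\tfrac12)$ for integer $k$. Avoiding this seems to force $k\ge T$, i.e.\ pseudo-polynomially many players and prizes. The paper's construction sidesteps both problems by aligning every player's payoff with its weight and every prize value with an existing coalition's weight, so that ``$x(C)<p_j$'' and ``$w(C)>p_j$'' are mutually exclusive for all but the single \textsc{Partition}-witnessing coalition, whose success is then settled by a tie-break rule rather than by a blocking argument. You would need either to import that alignment for your dummies (which the payoff budget forbids) or to find a different way to certify non-blockability; until then the reduction is not established.
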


\begin{proof}
    We reduce from the \textsc{Partition} problem, which is known to be $NP$-complete \cite{johnson1979computers}. An instance of \textsc{Partition} is given by a finite set $A = \set{s_1,\ldots,s_n}$ of positive integers, and the question is to determine whether there exists a subset $A' \subseteq A$ such that $\sum_{i \in A'} s_i = \sum_{i \in A \setminus A'} s_i$.
    
    Suppose that we are given an instance $A = \set{s_1,\ldots,s_n}$ of \textsc{Partition} and assume without loss of generality that the elements of $A$ are ordered, so that $s_1 \geq s_2 \geq \ldots \geq s_n$. Let $S = \frac{1}{2}\cdot\sum_{i \in A} s_i$. Now we can assume that $s_i < S$ for all $i \in \set{1,\ldots,n}$ because if $s_i > S$ for some $i$, then the answer to \textsc{Partition} would trivially be ``no'', and if $s_i = S$ for some $i$, then it is easy to find the answer to \textsc{Partition} is trivially ``yes'' by definition of $S$. With this, we construct a $k$-WVG $G = (\vec{w};\vec{p})$ with $n+2$ players $N = \set{1,2,\ldots,n+2}$, weights given by $\vec{w} = \set{s_1,s_2,\ldots,s_n,S,S}$, and prizes given by $\vec{p} = (2S+\varepsilon,S,s_1,\ldots,s_n)$, for some $\varepsilon < 1$.
    
    Now, consider the outcome $\omega = (\pi,\vec{x})$, where $\pi = [n+2\ \vert\ n+1\ \vert\ 1\ \vert\ 2\ \vert \ldots \vert\ n]$ and $\vec{x} = (s_1,\ldots,s_n, S, 2S+\varepsilon)$. We assume that the tie-breaking order can be represented succinctly and is defined so that 1) $\set{n+2} \succ \set{n+1}$, 2) $\set{n+1,n+2} \succ \set{1,\ldots,n}$, and 3) for any coalitions $C_1,C_2 \in \Pi(N)$ such that $w(C_1) = w(C_2)$, if player $1$ is in one of these coalitions, then ties are broken in favour of the coalition containing them. We show that the answer to \textsc{Partition} is ``yes'' if and only if there exists an SCD from $\omega$.
    
    For the forward direction, suppose that $A$ is a ``yes'' instance of \textsc{Partition} and let $G = (\vec{w};\vec{p})$ be the $k$-WVG constructed as described above. Because $A$ is a ``yes'' instance of \textsc{Partition}, there exists a subset $A' \subsetneq A$ such that $\sum_{i \in A'} s_i = S$. Now, either $A'$ or $A \setminus A'$ contains $s_1$. Let $C$ be the coalition of players in $N$ with weights corresponding to either of $A'$ or $A \setminus A'$ that contains player $1$ (who has weight $s_1$). We will show that $C \cup \set{n+1}$ forms an SCD from $\omega$. To see why, note first that $w(C \cup \set{n+1}) = 2S$ and $x(C \cup \set{n+1}) = 2S$. Thus, the only prize that $C \cup \set{n+1}$ will benefit from winning is the first prize, whose value is $2S+\varepsilon$. Secondly, observe that for the remaining players $R = N \setminus (C \cup \set{n+1})$, we have $w(R) = 2S$. Since ties are broken in favour of the coalition containing player $1$, i.e., coalition $C$, it follows that even if all of the remaining players in $R$ join a single coalition to prevent $C \cup \set{n+1}$ from winning first prize, ties will be broken in favour of this deviating coalition and hence, they will be able to successfully deviate to win first prize and improve their payoff. Thus, there exists an SCD from $\omega$.
    
    For the reverse direction, suppose that $A$ is a ``no'' instance of \textsc{Partition} and consider the same construction $G = (\vec{w};\vec{p})$ and outcome $\omega$. We will show that $\omega$ is an SCD-stable outcome by considering all possible SCDs by different coalitions, which can be grouped into three categories depending on which prize they aim to win: first prize, second prize, or any other prize.
    
    Firstly, we consider deviations that aim to secure first prize. Note that player $n+2$ will never want to deviate from $\omega$, because they already win the highest amount that they could possibly attain. Now, by definition of SCDs, any coalition $C$ wanting to win first prize must satisfy $x(C) < 2S+\varepsilon$. Moreover, any coalition $C$ that can guarantee the first prize must also satisfy $w(C) \geq 2S$. Otherwise, the remaining players could prevent them from winning the first prize by all grouping together. Now, it cannot be the case that $C = \set{1,\ldots,n}$ forms an SCD, because it is assumed that $\set{n+1,n+2} \succ \set{1,\ldots,n}$. Thus, if such a deviating coalition $C$ were to exist, it must consist of player $n+1$ and some subset $B \subsetneq \set{1,\ldots,n}$. However, because $A$ is a ``no'' instance of \textsc{Partition}, it must be the case that either $w(B) > S$ or $w(B) < S$. If $w(B) > S$, then by construction of $\omega$ and the fact that the elements in $A$ are all positive integers, we have $x(B \cup \set{n+1}) \geq 2S + 1 > 2S + \varepsilon$ and hence, $B \cup \set{n+1}$ has no incentive to deviate from $\omega$. On the other hand, if $w(B) < S$, then $w(B \cup \set{n+1}) < 2S$, so this coalition does not have sufficient weight to secure first prize. Thus, there is no SCD for the first prize.
    
    Secondly, we consider deviations that aim to secure second prize. Now, such deviations must only include players from $\set{1,\ldots,n}$, as players $n+1$ and $n+2$ already achieve at least the value of the second prize. In this case, a similar argument holds as for the first case. If a deviating coalition $C$ exists, then it must satisfy $x(C) < S$ and $w(C) > S$, otherwise they could not beat player $n+1$ who wins second prize in $\omega$. However, because $\omega$ was defined so that $w_i = x_i$ for all $i \in \set{1,\ldots,n}$, it cannot be the case that both of these conditions are satisfied. Hence, there is no SCD for the second~prize.
    
    Finally, we consider the case where a subset $C \subsetneq  \set{1,\ldots,n}$ may want to deviate to win any prize $p \in \set{p_3,\ldots,p_n}$. Again, we see that any such coalition must satisfy $x(C) < p$ and $w(C) > w_j$, where $w_j$ is the player who wins prize $p$ in $\omega$. However, because the considered players' weights are equal to their payoffs under $\omega$ the second condition is equivalent to $x(C) > p$ and hence, both conditions cannot be satisfied by any coalition $C \subsetneq  \set{1,\ldots,n}$. Thus, there is no SCD for any prize.
\end{proof}

Using a similar construction, we obtain the same lower bound on the complexity of $\exists$-MCD and $\exists$-wMCD. We omit the proof here as the line of reasoning is similar to that of Theorem \ref{thm:exists_scd_complexity}.

\begin{restatable}{theorem}{existsMCDwMCD}
    \textsc{$\exists$-MCD} and \textsc{$\exists$-wMCD} are NP-hard.
\end{restatable}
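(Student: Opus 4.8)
The plan is to reduce from \textsc{Partition} again, re-using the device from the proof of Theorem~\ref{thm:exists_scd_complexity} in which each player's payoff in the target outcome equals its weight, so that a coalition can profitably claim the top prize only when it is light enough to strictly improve yet heavy enough to guarantee the prize---pinning its weight to a value that encodes a balanced subset. The forward direction is then immediate: a \textsc{Partition} solution yields a single coalition that can guarantee the top prize and thereby strictly improve, i.e.\ an SCD; by Remark~\ref{rem:hierarchy} (where $SCD(G)\subseteq MCD(G)\subseteq wMCD(G)$) the same outcome is susceptible to an MCD and a wMCD. It therefore remains to prove the reverse direction: on a ``no''-instance the designated outcome admits neither an MCD nor a wMCD.

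The reverse direction is the heart of the argument, and the decisive new phenomenon---absent for SCD---is that the deviators may split into several coalitions, which \emph{shrinks} the maximally-pessimistic residual set. For an MCD every deviating coalition must strictly gain; arranging the prizes so that only the top prize can ever be a strict gain forces an MCD to contain a single coalition, so it collapses to an SCD and is excluded exactly as in Theorem~\ref{thm:exists_scd_complexity}. The genuine obstacle is the wMCD, in which all but one deviating coalition may merely break even: a break-even \emph{helper} coalition removes its members from the residual without itself needing to gain anything. Indeed, in the verbatim construction of Theorem~\ref{thm:exists_scd_complexity} a single heavy coalition can seize the top prize while every other player deviates as a break-even singleton, hollowing out the residual and succeeding whatever the \textsc{Partition} answer; so the construction must be strengthened.

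I would therefore strengthen the construction so that the players whose presence in the residual is needed to block a spurious top-prize grab are \emph{un-recruitable}: they receive payoffs that no break-even deviation can match, so they cannot defect into helper coalitions and must remain in the residual. Keeping their aggregate weight above the threshold $w(N)/(k+1)$ that governs whether a fixed block of players can be shut out of every prize (cf.\ Lemma~\ref{lem:singleton_winners} and Corollary~\ref{cor:irreducible_game}) guarantees that, unless a balanced \textsc{Partition} coalition is carved out, the residual always keeps a coalition heavy enough to contest one of the targeted prizes. The main difficulty---and where I expect to spend the most effort---is the accompanying weight bookkeeping: proving that on a ``no''-instance no deviating partition is simultaneously weight-feasible (each deviating coalition out-ranking the residual for its target prize) and payoff-improving. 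Once helpers are neutralised in this way, the remaining cases---deviations aimed at the lower prizes and splittings of the deviating set into singletons---mirror the corresponding cases in the proof of Theorem~\ref{thm:exists_scd_complexity}.
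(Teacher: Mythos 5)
You have correctly located the crux: on a ``no''-instance the verbatim construction of Theorem~\ref{thm:exists_scd_complexity} is unsafe because deviators can be pulled out of the residual set, and your ``un-recruitable blocker'' idea is essentially how the paper guards the top prize (in its modified construction a single heavy player receives the first prize $S+\varepsilon$ outright, can never even weakly gain from a deviation in which someone else takes it, and so always remains to contest it). However, two steps of your plan break down. The first is your claim that one can arrange matters so that ``only the top prize can ever be a strict gain,'' whence an MCD collapses to an SCD. This ignores the cascade effect: whenever a deviating coalition is a non-singleton, the final partition has fewer coalitions and every lighter singleton moves up a rank to a strictly larger prize. Concretely, for the ``no''-instance $A=\{3,3,2\}$ (so $S=4$), Theorem~\ref{thm:exists_scd_complexity} gives weights $(3,3,2,4,4)$ and prizes $(8+\varepsilon,4,3,3,2)$, and the deviating partition consisting of $\{1,4\}$ together with the singleton $\{3\}$ is a genuine MCD from the prescribed outcome: $\{1,4\}$ (weight and payoff $7$) is guaranteed the first prize because the residual $\{2,5\}$ has weight $7$ and loses the tie-break to the coalition containing player~$1$, while $\{3\}$ (payoff $2$) is guaranteed at least the fourth prize, of value $3$. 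So the verbatim construction already fails for MCD, not only for wMCD, and not via your ``all other players break even'' mechanism (that full hollow-out is impossible, since total payoff equals total prize value); reasoning about the top prize alone cannot repair it.

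The second gap is that your forward direction (``a \textsc{Partition} solution yields an SCD, hence an MCD and a wMCD by Remark~\ref{rem:hierarchy}'') is in tension with your own fix: in Theorem~\ref{thm:exists_scd_complexity} the successful coalition must \emph{recruit} one of the heavy weight-$S$ players, which is exactly what un-recruitability forbids. The paper accepts this and makes the yes-instance witness a genuine MCD rather than an SCD: a balanced coalition deviates jointly with all remaining light players as singletons, the prizes are perturbed to $s_i'=s_i+(n-i)\delta$ so that each of those singletons strictly gains from the cascade, and a promise variant of \textsc{Partition} forces the two largest elements into the same side so that the tie-break against the lone weight-$S$ player can be defined consistently for both directions of the reduction. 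Finally, the threshold $w(N)/(k+1)$ of Lemma~\ref{lem:singleton_winners} and Corollary~\ref{cor:irreducible_game} is an artifact of uniform prizes and gives no leverage here. In short, you have the right obstacle and one of the right ingredients, but the MCD argument and the forward direction both need to be rebuilt along the lines above, and the deferred ``weight bookkeeping'' is precisely where the construction itself has to change.
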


\section{Related Work} \label{sec:related_work}
The related models of standard WVGs and PFGs, of which $k$-WVGs are a subset, have been the subject of extensive study. We discuss some of the related literature here.

\vspace*{1ex}\noindent\textbf{Complexity of standard WVGs:}
The stability of standard WVGs has been well-studied (see, for example, \cite{chalkiadakis2011computational}). It is shown in \cite{elkind2008coalitionstructuresinWVG} that deciding if the core of a WVG is non-empty is \textit{NP-hard} and in $\Sigma^P_2$ and the related problem of deciding whether an outcome is in the core is co-NP-complete. In \cite{grecoMPS11}, it was shown that the upper bound can be lowered to $\Delta^P_2$.

\vspace*{1ex}\noindent\textbf{Characteristic function games and partition function games:} As we noted, the standard WVG is a class of characteristic function games (see for example \cite{elkind2008coalitionstructuresinWVG}), while $k$-WVGs are a class of PFGs (see for example \cite{koczy2018partitionfunctiongames,thrall1962generalized,thrall1963n,zhao1992hybrid}), where each partition of players can invoke a different characteristic function for the coalitions within that partition. Such games can also be represented using embedded MC-nets \cite{michalak2010logic}, which have the benefit of being fully expressive and exponentially more concise than the conventional PFG representation in some cases. Our model is always exponentially more concise than the general representation, but focuses on a natural and intuitive subset of all PFGs.

\vspace*{1ex}\noindent\textbf{Games with multiple prizes and ranking:}
One class of characteristic function games that involves multiple prizes are known as \textit{Threshold Task Games} (TTGs) \cite{gal2020threshold}. In TTGs, there are several tasks, each with an associated threshold and value, such that the value of a coalition is given by the highest-valued task whose associated threshold is no greater than the coalition's weight. As with standard WVGs, there are no externalities to coalition formation in TTGs, which are an essential feature of $k$-WVGs. 

The idea of having the outcomes of a game depend on a \emph{ranking} of its participants, be they players or coalitions, was also proposed in a \emph{non-cooperative} setting by Brandt et al. in \cite{brandt_etal:2009a}. However, in notable distinction to our work, the outcomes of their so-called \emph{Ranking Games} are rankings over players rather than over coalitions of players. Moreover, they focused on Nash equilibrium, correlated equilibrium, and the price of anarchy, rather than on more cooperative stability concepts like the core and its variants.

\vspace*{1ex}\noindent\textbf{Assumptions about the residual players:}
Many different concepts of stability exist in PFGs. One dimension in which they differ is how the deviating players expect the residual players to respond: the $\alpha$-core \cite{AumannPeleg1960} assumes that the residual players would act against the deviating coalition, such that a deviation is only feasible if the deviating coalition is better off under all partitions of the residual players; the $\omega$-core \cite{Shenoy1979} assumes that the residual players would act favourably toward the deviating coalition, such that a deviation is feasible if the deviating coalition is better off under some partition of the residual players; the $s$-core \cite{Chander1997} assumes that the residual players would form singleton coalitions; the $m$-core \cite{Maskin2003, Hafalir2007, Ambec2008, McQuillin2009} assumes that the residual players form a single coalition; and the $\delta$-core \cite{Hart1983} assumes that the residual players do not react and remain in their respective coalitions.
Our concept of the SCD-core for the $k$-WVG coincides with the $\alpha$-core of the more general PFGs. 

More nuanced assumptions about the residual players can also be made where various degrees of incentive compatibility are involved. Interested readers should see \cite{koczy2018partitionfunctiongames} for a survey of stability concepts.

\vspace*{1ex}\noindent\textbf{Deviations involving multiple coalitions:} Another dimension in which deviation concepts differ in PFGs is whether the deviating players form a single coalition or partition into multiple coalitions. In \cite{krus2009CostAllocation, bloch2014}, deviating players can form a partition if doing so increases the total value of all deviating coalitions. This corresponds to the concept of MCDT in this paper, where we implicitly allow transfers \emph{between} coalitions to satisfy the incentive compatibility requirements of the deviating coalitions.

\vspace*{1ex}\noindent\textbf{Efficiency} It has been noted in \cite{koczy2018partitionfunctiongames} that, in partition form games, an outcome that is stable to some deviation concept by a single coalition may not be Pareto-efficient. The simple intuition here is that the deviating coalition may gain a greater collective value if they instead form a partition of multiple deviating coalitions. \cite{Koczy2007RecursiveCore} removes this inefficiency by considering outcomes that are stable to deviations involving multiple coalitions. In Section \ref{sec:efficiency} we showed that this is indeed true for $k$-WVGs. However, we also showed that in $k$-WVGs, wMCD-stability is sufficient for ensuring Pareto-efficiency without resorting to inter-coalition transfers.

\section{Conclusions} \label{sec:conclusions}

$k$-WVGs provide a model for studying coalition formation where coalitions compete to win prizes based on their relative strengths. We began exploring some of its interesting properties, including stability concepts, the emptiness of the core, the computational complexity of stability, and the efficiency of stable outcomes. Further directions of research may include: finding tight complexity bounds for solution concepts, further characterise the core for games with all players and all prizes, exploring fairness notions, adopting alternative assumptions about the residual players in a deviation (e.g., dispensing with the pessimistic outlook), and whether, given a set of players, we can design prizes in order to achieve a particular set of stable outcomes. 

\balance

\begin{acks}
Lee was supported by the Oxford-Taiwan Graduate Scholarship and the Oxford-DeepMind Graduate Scholarship. Wooldridge and Harrenstein were supported by the UKRI under a Turing AI World Leading Researcher Fellowship (EP/W002949/1) awarded to Wooldridge.
\end{acks}



\bibliographystyle{ACM-Reference-Format} 
\bibliography{refs}


\begin{thebibliography}{25}


\ifx \showCODEN    \undefined \def \showCODEN     #1{\unskip}     \fi
\ifx \showDOI      \undefined \def \showDOI       #1{#1}\fi
\ifx \showISBNx    \undefined \def \showISBNx     #1{\unskip}     \fi
\ifx \showISBNxiii \undefined \def \showISBNxiii  #1{\unskip}     \fi
\ifx \showISSN     \undefined \def \showISSN      #1{\unskip}     \fi
\ifx \showLCCN     \undefined \def \showLCCN      #1{\unskip}     \fi
\ifx \shownote     \undefined \def \shownote      #1{#1}          \fi
\ifx \showarticletitle \undefined \def \showarticletitle #1{#1}   \fi
\ifx \showURL      \undefined \def \showURL       {\relax}        \fi
\providecommand\bibfield[2]{#2}
\providecommand\bibinfo[2]{#2}
\providecommand\natexlab[1]{#1}
\providecommand\showeprint[2][]{arXiv:#2}

\bibitem[\protect\citeauthoryear{Ambec and Ehlers}{Ambec and Ehlers}{2008}]%
        {Ambec2008}
\bibfield{author}{\bibinfo{person}{Stefan Ambec} {and} \bibinfo{person}{Lars
  Ehlers}.} \bibinfo{year}{2008}\natexlab{}.
\newblock \showarticletitle{Sharing a river among satiable agents}.
\newblock \bibinfo{journal}{\emph{Games and Economic Behavior}}
  \bibinfo{volume}{64}, \bibinfo{number}{1} (\bibinfo{date}{Sept.}
  \bibinfo{year}{2008}), \bibinfo{pages}{35--50}.
\newblock
\urldef\tempurl%
\url{https://doi.org/10.1016/j.geb.2007.09.005}
\showDOI{\tempurl}


\bibitem[\protect\citeauthoryear{Aumann and Peleg}{Aumann and Peleg}{1960}]%
        {AumannPeleg1960}
\bibfield{author}{\bibinfo{person}{Robert~J. Aumann} {and}
  \bibinfo{person}{Bezalel Peleg}.} \bibinfo{year}{1960}\natexlab{}.
\newblock \showarticletitle{{Von Neumann-Morgenstern solutions to cooperative
  games without side payments}}.
\newblock \bibinfo{journal}{\emph{Bull. Amer. Math. Soc.}}
  \bibinfo{volume}{66}, \bibinfo{number}{3} (\bibinfo{year}{1960}),
  \bibinfo{pages}{173 -- 179}.
\newblock
\urldef\tempurl%
\url{https://doi.org/bams/1183523511}
\showDOI{\tempurl}


\bibitem[\protect\citeauthoryear{Bloch and van~den Nouweland}{Bloch and van~den
  Nouweland}{2014}]%
        {bloch2014}
\bibfield{author}{\bibinfo{person}{Francis Bloch} {and} \bibinfo{person}{Anne
  van~den Nouweland}.} \bibinfo{year}{2014}\natexlab{}.
\newblock \showarticletitle{Expectation formation rules and the core of
  partition function games}.
\newblock \bibinfo{journal}{\emph{Games and Economic Behavior}}
  \bibinfo{volume}{88} (\bibinfo{date}{Nov.} \bibinfo{year}{2014}),
  \bibinfo{pages}{339--353}.
\newblock
\urldef\tempurl%
\url{https://doi.org/10.1016/j.geb.2014.10.012}
\showDOI{\tempurl}


\bibitem[\protect\citeauthoryear{Brandt, Fischer, Harrenstein, and
  Shoham}{Brandt et~al\mbox{.}}{2009}]%
        {brandt_etal:2009a}
\bibfield{author}{\bibinfo{person}{Felix Brandt}, \bibinfo{person}{Felix
  Fischer}, \bibinfo{person}{Paul Harrenstein}, {and} \bibinfo{person}{Yoav
  Shoham}.} \bibinfo{year}{2009}\natexlab{}.
\newblock \showarticletitle{Ranking games}.
\newblock \bibinfo{journal}{\emph{Artificial Intelligence}}
  \bibinfo{volume}{173}, \bibinfo{number}{2} (\bibinfo{year}{2009}),
  \bibinfo{pages}{221--239}.
\newblock


\bibitem[\protect\citeauthoryear{Budish}{Budish}{2011}]%
        {budish2011MMS}
\bibfield{author}{\bibinfo{person}{Eric Budish}.}
  \bibinfo{year}{2011}\natexlab{}.
\newblock \showarticletitle{The Combinatorial Assignment Problem: Approximate
  Competitive Equilibrium from Equal Incomes}.
\newblock \bibinfo{journal}{\emph{Journal of Political Economy}}
  \bibinfo{volume}{119}, \bibinfo{number}{6} (\bibinfo{year}{2011}),
  \bibinfo{pages}{1061--1103}.
\newblock
\urldef\tempurl%
\url{https://doi.org/10.1086/664613}
\showDOI{\tempurl}


\bibitem[\protect\citeauthoryear{Chalkiadakis, Elkind, and
  Wooldridge}{Chalkiadakis et~al\mbox{.}}{2011}]%
        {chalkiadakis2011computational}
\bibfield{author}{\bibinfo{person}{Georgios Chalkiadakis},
  \bibinfo{person}{Edith Elkind}, {and} \bibinfo{person}{Michael Wooldridge}.}
  \bibinfo{year}{2011}\natexlab{}.
\newblock \showarticletitle{Computational aspects of cooperative game theory}.
\newblock \bibinfo{journal}{\emph{Synthesis Lectures on Artificial Intelligence
  and Machine Learning}} \bibinfo{volume}{5}, \bibinfo{number}{6}
  (\bibinfo{year}{2011}), \bibinfo{pages}{1--168}.
\newblock


\bibitem[\protect\citeauthoryear{Chander and Tulkens}{Chander and
  Tulkens}{1997}]%
        {Chander1997}
\bibfield{author}{\bibinfo{person}{Parkash Chander} {and}
  \bibinfo{person}{Henry Tulkens}.} \bibinfo{year}{1997}\natexlab{}.
\newblock \showarticletitle{The core of an economy with multilateral
  environmental externalities}.
\newblock \bibinfo{journal}{\emph{International Journal of Game Theory}}
  \bibinfo{volume}{26}, \bibinfo{number}{3} (\bibinfo{date}{Oct.}
  \bibinfo{year}{1997}), \bibinfo{pages}{379--401}.
\newblock
\urldef\tempurl%
\url{https://doi.org/10.1007/bf01263279}
\showDOI{\tempurl}


\bibitem[\protect\citeauthoryear{Coase}{Coase}{1960}]%
        {coase1960problem}
\bibfield{author}{\bibinfo{person}{Ronald.~H. Coase}.}
  \bibinfo{year}{1960}\natexlab{}.
\newblock \showarticletitle{The Problem of Social Cost}.
\newblock \bibinfo{journal}{\emph{The Journal of Law \& Economics}}
  \bibinfo{volume}{3} (\bibinfo{year}{1960}), \bibinfo{pages}{1--44}.
\newblock
\showISSN{00222186, 15375285}
\urldef\tempurl%
\url{http://www.jstor.org/stable/724810}
\showURL{%
\tempurl}


\bibitem[\protect\citeauthoryear{Elkind, Chalkiadakis, and Jennings}{Elkind
  et~al\mbox{.}}{2008}]%
        {elkind2008coalitionstructuresinWVG}
\bibfield{author}{\bibinfo{person}{Edith Elkind}, \bibinfo{person}{Georgios
  Chalkiadakis}, {and} \bibinfo{person}{Nicholas Jennings}.}
  \bibinfo{year}{2008}\natexlab{}.
\newblock \showarticletitle{Coalition Structures in Weighted Voting Games}. In
  \bibinfo{booktitle}{\emph{ECAI 2008 - 18th European Conference on Artificial
  Intelligence}}. \bibinfo{pages}{393--397}.
\newblock
\urldef\tempurl%
\url{https://doi.org/10.3233/978-1-58603-891-5-393}
\showDOI{\tempurl}


\bibitem[\protect\citeauthoryear{Elkind, Goldberg, Goldberg, and
  Wooldridge}{Elkind et~al\mbox{.}}{2009}]%
        {elkind:2009}
\bibfield{author}{\bibinfo{person}{Edith Elkind}, \bibinfo{person}{Leslie~Ann
  Goldberg}, \bibinfo{person}{Paul~W. Goldberg}, {and}
  \bibinfo{person}{Michael~J. Wooldridge}.} \bibinfo{year}{2009}\natexlab{}.
\newblock \showarticletitle{On the computational complexity of weighted voting
  games}.
\newblock \bibinfo{journal}{\emph{Ann. Math. Artif. Intell.}}
  \bibinfo{volume}{56}, \bibinfo{number}{2} (\bibinfo{year}{2009}),
  \bibinfo{pages}{109--131}.
\newblock


\bibitem[\protect\citeauthoryear{Gal, Nguyen, Tran, and Zick}{Gal
  et~al\mbox{.}}{2020}]%
        {gal2020threshold}
\bibfield{author}{\bibinfo{person}{Kobi Gal}, \bibinfo{person}{Ta~Duy Nguyen},
  \bibinfo{person}{Quang~Nhat Tran}, {and} \bibinfo{person}{Yair Zick}.}
  \bibinfo{year}{2020}\natexlab{}.
\newblock \showarticletitle{Threshold Task Games: Theory, Platform and
  Experiments.}. In \bibinfo{booktitle}{\emph{AAMAS}}.
  \bibinfo{pages}{393--401}.
\newblock


\bibitem[\protect\citeauthoryear{Greco, Malizia, Palopoli, and Scarcello}{Greco
  et~al\mbox{.}}{2011}]%
        {grecoMPS11}
\bibfield{author}{\bibinfo{person}{Gianluigi Greco}, \bibinfo{person}{Enrico
  Malizia}, \bibinfo{person}{Luigi Palopoli}, {and} \bibinfo{person}{Francesco
  Scarcello}.} \bibinfo{year}{2011}\natexlab{}.
\newblock \showarticletitle{On the Complexity of the Core over Coalition
  Structures}. In \bibinfo{booktitle}{\emph{IJCAI}}. \bibinfo{pages}{216--221}.
\newblock
\urldef\tempurl%
\url{https://doi.org/10.5591/978-1-57735-516-8/IJCAI11-047}
\showURL{%
\tempurl}


\bibitem[\protect\citeauthoryear{Hafalir}{Hafalir}{2007}]%
        {Hafalir2007}
\bibfield{author}{\bibinfo{person}{Isa~E. Hafalir}.}
  \bibinfo{year}{2007}\natexlab{}.
\newblock \showarticletitle{Efficiency in coalition games with externalities}.
\newblock \bibinfo{journal}{\emph{Games and Economic Behavior}}
  \bibinfo{volume}{61}, \bibinfo{number}{2} (\bibinfo{date}{Nov.}
  \bibinfo{year}{2007}), \bibinfo{pages}{242--258}.
\newblock
\urldef\tempurl%
\url{https://doi.org/10.1016/j.geb.2007.01.009}
\showDOI{\tempurl}


\bibitem[\protect\citeauthoryear{Hart and Kurz}{Hart and Kurz}{1983}]%
        {Hart1983}
\bibfield{author}{\bibinfo{person}{Sergiu Hart} {and} \bibinfo{person}{Mordecai
  Kurz}.} \bibinfo{year}{1983}\natexlab{}.
\newblock \showarticletitle{Endogenous Formation of Coalitions}.
\newblock \bibinfo{journal}{\emph{Econometrica}} \bibinfo{volume}{51},
  \bibinfo{number}{4} (\bibinfo{date}{July} \bibinfo{year}{1983}),
  \bibinfo{pages}{1047}.
\newblock
\urldef\tempurl%
\url{https://doi.org/10.2307/1912051}
\showDOI{\tempurl}


\bibitem[\protect\citeauthoryear{Johnson and Garey}{Johnson and Garey}{1979}]%
        {johnson1979computers}
\bibfield{author}{\bibinfo{person}{David~S Johnson} {and}
  \bibinfo{person}{Michael~R Garey}.} \bibinfo{year}{1979}\natexlab{}.
\newblock \bibinfo{booktitle}{\emph{Computers and intractability: A guide to
  the theory of NP-completeness}}.
\newblock \bibinfo{publisher}{WH Freeman}.
\newblock


\bibitem[\protect\citeauthoryear{K{\'{o}}czy}{K{\'{o}}czy}{2007}]%
        {Koczy2007RecursiveCore}
\bibfield{author}{\bibinfo{person}{L{\'{a}}szl{\'{o}}~{\'{A}}. K{\'{o}}czy}.}
  \bibinfo{year}{2007}\natexlab{}.
\newblock \showarticletitle{A recursive core for partition function form
  games}.
\newblock \bibinfo{journal}{\emph{Theory and Decision}} \bibinfo{volume}{63},
  \bibinfo{number}{1} (\bibinfo{date}{March} \bibinfo{year}{2007}),
  \bibinfo{pages}{41--51}.
\newblock
\urldef\tempurl%
\url{https://doi.org/10.1007/s11238-007-9030-x}
\showDOI{\tempurl}


\bibitem[\protect\citeauthoryear{K{\'{o}}czy}{K{\'{o}}czy}{2018}]%
        {koczy2018partitionfunctiongames}
\bibfield{author}{\bibinfo{person}{L{\'{a}}szl{\'{o}}~{\'{A}}. K{\'{o}}czy}.}
  \bibinfo{year}{2018}\natexlab{}.
\newblock \bibinfo{booktitle}{\emph{{Partition Function Form Games}}}.
\newblock Number 978-3-319-69841-0 in \bibinfo{series}{Theory and Decision
  Library C}. \bibinfo{publisher}{Springer}.
\newblock
\showISBNx{ARRAY(0x526f68a0)}
\urldef\tempurl%
\url{https://doi.org/10.1007/978-3-319-69841-0}
\showDOI{\tempurl}


\bibitem[\protect\citeauthoryear{Kru{\'s}}{Kru{\'s}}{2009}]%
        {krus2009CostAllocation}
\bibfield{author}{\bibinfo{person}{Lech Kru{\'s}}.}
  \bibinfo{year}{2009}\natexlab{}.
\newblock \showarticletitle{{Cost allocation in partition function form
  games}}.
\newblock \bibinfo{journal}{\emph{Operations Research and Decisions}}
  \bibinfo{volume}{19}, \bibinfo{number}{2} (\bibinfo{year}{2009}),
  \bibinfo{pages}{39--49}.
\newblock
\urldef\tempurl%
\url{https://ideas.repec.org/a/wut/journl/v2y2009p39-49.html}
\showURL{%
\tempurl}


\bibitem[\protect\citeauthoryear{Maskin}{Maskin}{2003}]%
        {Maskin2003}
\bibfield{author}{\bibinfo{person}{Eric Maskin}.}
  \bibinfo{year}{2003}\natexlab{}.
\newblock \showarticletitle{Bargaining, coalitions, and externalities}.
\newblock  (\bibinfo{year}{2003}).
\newblock
\urldef\tempurl%
\url{http://www.cirje.e.u-tokyo.ac.jp/research/workshops/micro/micropaper04/maskin1.pdf}
\showURL{%
\tempurl}


\bibitem[\protect\citeauthoryear{McQuillin}{McQuillin}{2009}]%
        {McQuillin2009}
\bibfield{author}{\bibinfo{person}{Ben McQuillin}.}
  \bibinfo{year}{2009}\natexlab{}.
\newblock \showarticletitle{The extended and generalized Shapley value:
  Simultaneous consideration of coalitional externalities and coalitional
  structure}.
\newblock \bibinfo{journal}{\emph{Journal of Economic Theory}}
  \bibinfo{volume}{144}, \bibinfo{number}{2} (\bibinfo{date}{March}
  \bibinfo{year}{2009}), \bibinfo{pages}{696--721}.
\newblock
\urldef\tempurl%
\url{https://doi.org/10.1016/j.jet.2008.05.010}
\showDOI{\tempurl}


\bibitem[\protect\citeauthoryear{Michalak, Marciniak, Szamotulski, Rahwan,
  Wooldridge, McBurney, and Jennings}{Michalak et~al\mbox{.}}{2010}]%
        {michalak2010logic}
\bibfield{author}{\bibinfo{person}{Tomasz Michalak}, \bibinfo{person}{Dorota
  Marciniak}, \bibinfo{person}{Marcin Szamotulski}, \bibinfo{person}{Talal
  Rahwan}, \bibinfo{person}{Michael Wooldridge}, \bibinfo{person}{Peter
  McBurney}, {and} \bibinfo{person}{Nicholas Jennings}.}
  \bibinfo{year}{2010}\natexlab{}.
\newblock \showarticletitle{A logic-based representation for coalitional games
  with externalities}.
\newblock  (\bibinfo{year}{2010}).
\newblock


\bibitem[\protect\citeauthoryear{Shenoy}{Shenoy}{1979}]%
        {Shenoy1979}
\bibfield{author}{\bibinfo{person}{Prakash~P. Shenoy}.}
  \bibinfo{year}{1979}\natexlab{}.
\newblock \showarticletitle{On coalition formation: a game-theoretical
  approach}.
\newblock \bibinfo{journal}{\emph{International Journal of Game Theory}}
  \bibinfo{volume}{8}, \bibinfo{number}{3} (\bibinfo{date}{Sept.}
  \bibinfo{year}{1979}), \bibinfo{pages}{133--164}.
\newblock
\urldef\tempurl%
\url{https://doi.org/10.1007/bf01770064}
\showDOI{\tempurl}


\bibitem[\protect\citeauthoryear{Thrall}{Thrall}{1962}]%
        {thrall1962generalized}
\bibfield{author}{\bibinfo{person}{Robert~M Thrall}.}
  \bibinfo{year}{1962}\natexlab{}.
\newblock \showarticletitle{Generalized characteristic functions for n-person
  games}.
\newblock \bibinfo{journal}{\emph{Recent advances in game theory}}
  (\bibinfo{year}{1962}), \bibinfo{pages}{157--160}.
\newblock


\bibitem[\protect\citeauthoryear{Thrall and Lucas}{Thrall and Lucas}{1963}]%
        {thrall1963n}
\bibfield{author}{\bibinfo{person}{Robert~M Thrall} {and}
  \bibinfo{person}{William~F Lucas}.} \bibinfo{year}{1963}\natexlab{}.
\newblock \showarticletitle{N-person games in partition function form}.
\newblock \bibinfo{journal}{\emph{Naval Research Logistics Quarterly}}
  \bibinfo{volume}{10}, \bibinfo{number}{1} (\bibinfo{year}{1963}),
  \bibinfo{pages}{281--298}.
\newblock


\bibitem[\protect\citeauthoryear{Zhao}{Zhao}{1992}]%
        {zhao1992hybrid}
\bibfield{author}{\bibinfo{person}{Jingang Zhao}.}
  \bibinfo{year}{1992}\natexlab{}.
\newblock \showarticletitle{The hybrid solutions of an n-person game}.
\newblock \bibinfo{journal}{\emph{Games and Economic Behavior}}
  \bibinfo{volume}{4}, \bibinfo{number}{1} (\bibinfo{year}{1992}),
  \bibinfo{pages}{145--160}.
\newblock


\end{thebibliography}


\clearpage

\ifarXiv
    \appendix
    \section{Supplementary Material}

\Threeplayertwoprizegame*

\begin{proof}
    Firstly, observe that there are only two tie-breaking orders consistent with $\set 1\succ\set 2\succ\set 3$, namely, one with $\set 1\succ\set{2,3}$ and one with $\set{2,3}\succ \set 1$.
	
	\paragraph{Case 1:} $\set 1\succ\set{2,3}$. Exploiting Remark~\ref{rem:hierarchy}, it suffices to show that:
	\begin{enumerate}[label=({\roman{*}})]
        \item\label{itemi} $\vec{x} = (R, 1, 0)$ is MCDT-stable. 
        \item\label{itemii} $\vec{x} = (R, 1, 0)$ is the only SCD-stable payoff vector. 
	\end{enumerate}
	
	For~\ref{itemi}, assume $\vec x=(R,1,0)$ and consider $\pi'=[1\  |\ 2\ |\ 3]$. 
	By Remark~\ref{rem:irrelevance_of_coalition_structure}, it suffices to show that $(\pi',\vec x)$ is MCDT-stable.
	To see this, observe that any deviating coalition must be either $\set{1,3}$ or $\set{2,3}$, and that either partition $\pi''=[1\ 3\ |\ 2]$ or partition $\pi'''=[2\ 3\ |\ 1]$ forms.
	Observe that $x_1+x_3=v(\set{1,3},\pi'')$, if the former, and $x_2+x_3=v(\set{2,3},\pi''')$ if the latter.
	Hence, we may conclude that $(\pi',\vec x)$ is MCDT-stable.

    For~\ref{itemii}, assume for the contrapositive that $\vec x\neq(R,1,0)$. 
	If $x_1<R$, recall that $\set 1\succ\set{2,3}$, and player~$1$ can form a singleton SCD and guarantee first prize on their own. 
	As~$\omega$ is an outcome, $x_1 \leq R$, we may assume that $x_1=R$. 
	With $\vec x\neq(R,1,0)$ it follows that $x_2<1$ and $x_3>0$ and $\pi=[1\ |\ 2\ 3]$.
	In this case, player~$2$ can SCD-deviate on their own and guarantee second prize.
	In either case,~$\omega$ fails to be SCD-stable.

	\paragraph{Case 2:} $\set{2,3}\succ \set 1$ and $1 < R \leq 2$. Exploiting Remark~\ref{rem:hierarchy}, it suffices to show that:
	\begin{enumerate}[label=({\roman{*}})]
		\item\label{itemiv} $\vec{x} = (1, 1, R-1)$ is MCD-stable. 
		\item\label{itemv} $\vec{x} = (1, 1, R-1)$ is not wMCD-stable. 
        \item\label{itemvi} $\vec{x} = (1, 1, R-1)$ is the only SCD-stable payoff. 
	\end{enumerate}

	For~\ref{itemiv}, first consider deviations of a single deviating coalition. No single player can win $p_1$ on its own, and therefore any deviation to win $p_1$ must consist of at least 2 players. However, any two players already have a collective payoff greater than or equal to $p_1$. Neither player 1 nor~2 would deviate (on their own or in coalition with others) to win $p_2$, since they already have payoff equating to $p_2$. Player 3 is not capable to winning $p_2$ on their own, since the remaining players can form the MCD $[1\ |\ 2]$ to take both prizes. Next, consider deviations involving all players (whether in two or three deviating coalitions): It is clear that there is no incentive for a deviation involving all players, as players have a collective payoff equally all the prizes combined. Finally, consider deviations with two deviating singleton coalitions. Such deviations cannot include player 3, as player 3 would receive no prize in such deviations (player 1 and 2 would claim the first and second prize respectively). They also cannot involve $[1\ |\ 2]$, as player 2 would win the second prize, which is not a strict improvement on its existing payoff. Thus there is no feasible MCD.
	
    For~\ref{itemv}, see that a wMCD of $[1\ |\ 2]$ guarantees payoffs of $R$ for player 1 and $1$ for player 2, which is a strict improvement for player 1.

	For~\ref{itemvi}, let $i, j \in \{1, 2\}$ and $i \neq j$, and consider any alternative payoff with $x_i \neq 1$. If $x_i > 1$, then $x_j + x_3 < R$, and therefore $\{j,3\}$ would form an SCD to win $p_1$. If $x_i < 1$, then $i$ can form a singleton SCD to win $p_2$. Thus, an SCD-stable payoff must have $x_1 = x_2 = 1$ and $x_3 = R-1$.

    \paragraph{Case 3:} $\set{2,3}\succ \set 1$ and $R > 2$. Exploiting Remark~\ref{rem:hierarchy}, it suffices to show that no payoff vector $\vec{x}$ is SCD-stable. 

    Suppose there exists an SCD-stable outcome with payoff $\vec{x}$ and let $a$, $b$, $c$ be such that $x_a\ge x_b\ge x_c$. Note that $x_b + x_c < R$: This is trivially true if the grand coalition is formed. Otherwise both prizes are claimed, with total value $p_1+p_2 > 3$, which implies $x_1 > 1$ and $x_b + x_c < R$. Players $\{b, c\}$ can thus form an SCD to take $p_1$. 
\end{proof}

\existsMCDwMCD*

\begin{proof}
    We reduce from the \textsc{Partition} problem, which is known to be $NP$-complete \cite{johnson1979computers}. An instance of \textsc{Partition} is given by a finite set $A = \set{s_1,\ldots,s_n}$ of positive integers, and the question is to determine whether there exists a subset $A' \subseteq A$ such that $\sum_{i \in A'} s_i = \sum_{i \in A \setminus A'} s_i$.
    For ease of description, we assume that $s_1 \geq s_2 \geq \ldots \geq s_n$. Moreover, we will use a version of \textsc{Partition} with the following promised properties, which remains NP-complete.
    \begin{itemize}
        \item $s_1 < S := \frac{1}{2}\cdot\sum_{i \in A} s_i$. This is because if $s_i > S$ for some $i$, then the answer to the partition problem would trivially be ``no'', and if $s_i = S$ for some $i$, then it is easy to find the answer to the partition problem is trivially ``yes'' by definition of $S$. 
        
        \item If the instance is a ``yes'' instance, then for every subset such that $\sum_{i \in A'} s_i = S$, either $\{s_1,s_2\} \subseteq A'$ or $\{s_1,s_2\} \cap A' = \emptyset$.
        Indeed, we can convert every \textsc{Partition} instance into one satisfying this property by adding five additional integers, with two of the integers having a value of $6S$, and the remaining three having a value of $4S$.
        The new instance is a ``yes'' instance if and only if the original one is a ``yes'' instance. Meanwhile, the two largest integers with value $6S$ must appear in the same set in order to partition the integers into two equal sets.
    \end{itemize}

    We begin by reducing this special instance of \textsc{Partition} to \textsc{$\exists$-MCD}. For this, we construct a $k$-prize WVG $G = (\vec{w};\vec{p})$ with $n+1$ players $N = \set{1,2,\ldots,n+1}$, weights $\vec{w} = \set{s_1,s_2,\ldots,s_n,S}$ and prizes $\vec{p} = (S+\varepsilon,s_1',s_2',\ldots,s_n')$, where $s_i'= s_i + (n-i)\delta$ for all $i \in \set{1,\ldots,n}$, for some $\varepsilon < 1$, and some $\delta < \frac{\epsilon}{n^2}$. Note that under this choice of $\delta$, we have $\sum_{i=1}^n (n-i)\delta < n^2 \delta < \varepsilon$ and $p_i > p_j$ for all $1 \leq i < j \leq n+1$. Intuitively, we construct a game where $n$ of the players have weights corresponding to the values of the elements in the given set $A$ and an additional player with weight $S = \frac{1}{2}\cdot\sum_{i \in A} s_i$.
    
    Now, consider the outcome 
    \[
        \omega = ([n+1\ \vert\ 1\ \vert\ 2\ \vert \ldots \vert\ n],(s_1',\ldots,s_n',S+\varepsilon)),
    \]
    where each player is in a singleton coalition and prizes are simply allocated according to the weight of each player. 
    We assume that for any coalitions $C_1,C_2 \in \Pi(N)$ such that $w(C_1) = w(C_2)$, if both players $1$ and $2$ are in one of these coalitions, then ties are broken in favour of the coalition containing them. Additionally, we assume that ties are always broken in favour of singletons and that any tie-breaks not involving player $1$ or singletons follow an arbitrary rule that can be implemented efficiently.
    We show that the answer to the partition problem is ``yes'' if and only if there exists an MCD/wMCD from $\omega$.
    
    \paragraph{Forward Direction}
    For the forward direction, suppose that $A$ is a ``yes'' instance of the partition problem and let $G = (\vec{w};\vec{p})$ be the $k$-prize WVG constructed as described above. We argue that there exists an MCD. (The existence of a wMCD then follows immediately because $MCD(G) \subset wMCD(G)$ for all games $G$.)
    Since $A$ is a ``yes'' instance of \textsc{Partition}, there exists a subset $A' \subset A$ such that $\sum_{i \in A'} s_i = S$. 
    Now, according to the promised property, either $A'$ or $A \setminus A'$ contains both $s_1$ and $s_2$. Let $C$ be the coalition of players in $N$ with weights corresponding to either of $A'$ or $A \setminus A'$ that contains players $1$ and $2$ (who have weights $s_1$ and $s_2$, respectively), and let $C' = \{1,\dots,n\}\setminus C$. We will show that $\{C\} \cup \{\{i\}: i \in C'\}$ forms an MCD from $\omega$. Indeed, we have $w(C) = W(n+1) = S > s_i$ for all $i \in C'$, and because tie-breaks are assumed to be in favour of coalitions containing both players 1 and 2, it follows that $C$ will win the first prize. Additionally, $C$ will gain a strict improvement, because \[\sum_{i \in C} s_i' = \sum_{i \in C} s_i + (n-i)\delta = S + \sum_{i \in C} (n-i)\delta < S + \varepsilon,\] by definition of $\delta$. 
    Player $n+1$ will win the largest of the remaining prizes, which has value $s'_1$. 
    Let $C' = \{\ell_1, \dots, \ell_\lambda\}$ such that $s_{\ell_1} \ge s_{\ell_2} \ge \dots \ge s_{\ell_\lambda}$. Because all prizes are guaranteed to have different values by the addition of different multiples of $\delta$, player $\ell_1$ will win the prize of value $s'_2$ and gain a strict improvement. The original prize of player $\ell_1$ can be passed on to player $\ell_2$, and likewise every player $\ell_i \in C'$ will win a prize with value at least $s'_{\ell_{i-1}}$ and gain a strict improvement since $s'_{\ell_{i-1}} > s'_{\ell_i}$. 
    Thus, there exists an MCD from $\omega$.
    
    \paragraph{Reverse Direction}
    For the reverse direction, suppose that $A$ is a ``no'' instance of the partition problem and consider the same construction $G = (\vec{w};\vec{p})$ and outcome $\omega = (\pi, \vec{x})$. We will show that $\omega$ is a stable outcome against all possible wMCDs.
    (Similarly, stability against MCD then follows immediately since $MCD(G) \subset wMCD(G)$ for all games $G$.)
    
    Suppose for the sake of contradiction that there exists a wMCD $\pi_D$ from $\omega$ for some $D \subseteq N$, and suppose that $C \in \pi_D$. 
    Firstly, we consider the case where $C$ secures the first prize $p_1$ of value $S+\varepsilon$. If $C = \set{S}$, then this coalition does not gain anything. Moreover, we can assume without loss of generality that $n+1 \notin D$, because player $n+1$ will never stand to gain anything by deviating, and in all scenarios where this player would consider joining $D$, they will have no effect on the prizes that other coalitions in $D$ win. Otherwise, we have $C \neq \set{S}$. For the deviation to be beneficial for $C$, we must have that $x(C) < S+\varepsilon$. However, since $A$ is a ``no'' instance of \textsc{Partition} by assumption, this must mean that either $w(C) < S$, or $w(C) = S$ and $C$ does not contain both players $1$ and $2$. In the first case, we must have $w(C) < S = w(n+1)$. This means that player $n+1$ can prevent $C$ from winning the first prize, which contradicts the assumption. In the second case, because ties are broken in favour of singletons and $C$ is assumed not to contain both players 1 and 2, it follows that $S \succ C$ and hence, $C$ cannot win the first prize.
    
    Next, we consider the case where every coalition $C \in \pi_D$ wins one of the prizes in $\vec{p} \setminus \set{p_1}$. Let $C$ be the largest coalition in $\pi_D$. Then, there are two cases: either $\vert C \vert > 1$, or $\vert C \vert = 1$.
    
    If $\vert C \vert > 1$, then let $s_i'$ be the value of the prize that $\vert C \vert$ would like to claim. If $w(C) > w_i$, then $x(C) > x_i = s_i'$ since all weights are integers and hence, $C$'s payoff is strictly worse by winning prize $i$. Therefore, we can assume that $w(C) \leq w_i$. Because $C$ is assumed to the largest coalition in $D$ and ties are broken in favour of singletons, any player with weight at least $s_i$ must not be part of $D$. Thus, in order to win prize $p_i$, it must be that $w(C) > s_i$, otherwise all players in $\set{1,\ldots,i}$ can prevent $C$ from winning the $i$-th prize by remaining as singletons. This is a contradiction with the assumption that $w(C) \leq w_i = s_i$ and hence, either $C$ cannot win prize $p_i$, or it would be strictly worse off if it did. In this case, we can conclude that $\pi_D$ does not form a wMCD from $\omega$.
    
    If $\vert C \vert = 1$, then let $i$ be the only member of $C$. Because $\set{i}$ is the largest coalition in $D$, we can infer that no member of $C$ who is larger than $i$ is part of $D$ and will hence prevent $C$ from winning a higher prize than they currently obtain under $\omega$. In this case, we can assume that $i$ wins the same prize $p_i$ as they did under $\omega$ and move on to the second largest coalition $C_2$ in $\pi_D$. Repeating the process, if $\vert C_2 \vert > 1$, then the members of $C_2$ will not be able to benefit from any prize that they can secure. If $\vert C_2 \vert = 1$, then we move on to the third largest coalition, and so on. Eventually, we will either reach a coalition of size larger than $1$, in which case we can conclude that the deviation is not beneficial, or we reach the last coalition in $\pi_D$ which has only one member. In this case, all coalitions in $\pi_D$ would consist of singletons and would not improve over what they already achieved in $\omega$. Hence, there is no wMCD.
\end{proof}

\fi

\end{document}
